\title{Expected utility operators and coinsurance problem}
\author{Irina Georgescu \\ \footnotesize Academy of Economic Studies\\ \footnotesize Department of Economic Cybernetics\\ \footnotesize Pia$\c{t}$a Romana No 6  R 70167, Oficiul Postal 22, Bucharest, Romania\\
 \footnotesize Email: irina.georgescu@csie.ase.ro }
\date{}
\begin{document}
\maketitle

\begin{abstract}

The expected utility operators introduced in a previous paper, offer a framework for a general risk aversion theory, in which risk is modelled by a fuzzy number $A$.

In this paper we formulate a coinsurance problem in the possibilistic setting defined by an expected utility operator $T$. Some properties of the optimal saving $T$-coinsurance rate are proved and an approximate calculation formula of this is established with respect to the Arrow-Pratt index of the utility function of the policyholder, as well as the expected value and the variance of a fuzzy number $A$. Various formulas of the optimal $T$-coinsurance rate are deduced for a few expected utility operators in case of a triangular fuzzy number and of some HARA and CRRA-type utility functions.

\end{abstract}

\textbf{Keywords}: expected utility operators, coinsurance

\newtheorem{definitie}{Definition}[section]
\newtheorem{propozitie}[definitie]{Proposition}
\newtheorem{remarca}[definitie]{Remark}
\newtheorem{exemplu}[definitie]{Example}
\newtheorem{intrebare}[definitie]{Open question}
\newtheorem{lema}[definitie]{Lemma}
\newtheorem{teorema}[definitie]{Theorem}
\newtheorem{corolar}[definitie]{Corollary}

\newenvironment{proof}{\noindent\textbf{Proof.}}{\hfill\rule{2mm}{2mm}\vspace*{5mm}}



%

\section{Introduction}

In most cases, economic and financial activities are accompanied by risk, which generates pecuniary losses for the agents. A risk-averse agent will try to diminish losses caused by risk by closing an insurance contract. By \cite{eeckhoudt}, in the component of an insurance contract enter a premium $P$ paid by the agent (policyholder) to an insurer and a real function $I(.)$ which specifies the part of the loss that is recovered: if the loss has the size $x$, then the insurer will pay the agent the amount $I(x)$.

Usually, the function $I(.)$ is defined by setting a {\emph{coinsurance rate}} $\beta$, if $x$ is the loss then the policyholder will receive the amount $I(x)=\beta x$. The agent will choose that $\beta$ maximizing the expected utility of her final wealth. So an optimal problem occurs, called the {\emph{coinsurance problem}}.

A probabilistic model of risk assumes that this is mathematically represented by a random variable. The coinsurance problem from \cite{eeckhoudt}, \cite{mossin} is such a probabilistic model, in which the loss is a random variable, and the agent is described by a utility function.

In the paper \cite{georgescu5} there is studied a probabilistic type coinsurance problem: the agent is represented by a utility function, but the loss caused by risk is a fuzzy number.

The coinsurance problem from \cite{georgescu5} is formulated by maximizing a possibilistic expected utility theory, defined in \cite{georgescu2}, \cite{georgescu3} in the framework of a possibilistic treatment of risk aversion. On the other hand, in \cite{georgescu2} there is a second notion of possibilistic expected utility, and the expected utility operators from \cite{georgescu4} allow the definition of a general notion of possibilistic expected utility which generalizes the two mentioned above. By this, to each expected utility operator $T$ one associates a possibilistic expected utility theory (called $T$-possibilistic $EU$-theory), in which various topics on risk theory can be discussed.

The aim of this paper is to study the coinsurance problem in the framework offered by expected utility operators. The formulation of an abstract  coinsurance problem can be done inside an arbitrary  $T$-possibilistic $EU$-theory, but the proofs of the optimal coinsurance rate and its approximate computation assume $T$ to fulfill a supplementary property. For this purpose, the $D$-operators have been chosen, a class of expected utility operators
introduced in \cite{georgescu6} by a preservation condition of derivability of the utility function with respect to a parameter.

Section $2$ presents two notions of possibilistic expected utility from \cite{georgescu1}, \cite{georgescu2}, as well as some operators associated with fuzzy numbers (possibilistic expected utility, possibilistic variance).

In Section $3$ the definition of expected utility operators is recalled from \cite{georgescu3}, \cite{georgescu4} and the $D$-operators are introduced.

To formulate the coinsurance problem in the context of a $T$-possibilistic $EU$-theory, in Section $4$ a few basic notions are defined: the coinsurance contract, the $T$-premium for insurance indemnity, the $T$-coinsurance rate, etc. Assuming that $T$ is a $D$-operator, the $T$-coinsurance rate can be computed as a solution of a first order condition.

The results on the optimal $T$-coinsurance $\beta^\ast$ are contained in Section $5$. A first result is a possibilistic version of a Mossin theorem (\cite{mossin} or \cite{eeckhoudt}, p. 51). The main result is an approximate calculation formula of $\beta^\ast$, with respect to the expected value $E_f(A)$ of the fuzzy number $A$ representing the risk, the $T$-variance $Var_T(A)$ of $A$ and the Arrow-Pratt index of the agent's utility function.
It is also demonstrated a formula that approximates the maximal total expected utility (obtained by the choice of the $T$-coinsurance rate $\beta^\ast$).

Another result of the section asserts that if an agent $u_1$ ia more risk averse than an agent $u_2$, then the optimal $T$-coinsurance rate is higher for $u_1$ than for $u_2$.

In Section $6$ forms of the approximation formula of the optimal $T$-coinsurance rate in the particular case of the expected utility operators $T_1$, $T_2$ from \cite{georgescu1}, \cite{georgescu2} and a risk represented by a triangular fuzzy number are obtained. These formulas are applied for HARA and CRRA-type utility functions.

In the concluding remarks section a few open issues are commented and the result from Appendix presents a necessary condition for the positivity of the optimal $T$-coinsurance rate.

\section{Indicators of fuzzy numbers}

Fuzzy numbers are generalizations of real numbers, able to express imprecise information. Using Zadeh's extension principle \cite{zadeh2}, the operations with real numbers can be extended to fuzzy numbers, such most of algebraic properties are preserved \cite{dubois1}, \cite{dubois2}. At the same time fuzzy numbers can be thought of as possibilistic distributions \cite{dubois2}, \cite{carlsson2}. By parallelism with probabilistic distributions, but in a completely different way, with each fuzzy number possibilistic indicators can be associated: expected value, variance, covariance, moments, etc. \cite{carlsson1}, \cite{carlsson4}, \cite{dubois3}, \cite{fuller2}, \cite{georgescu1}, \cite{zhang}.
All these make the fuzzy numbers an effective tool in the possibilistic treatment of some topics on risk theory \cite{carlsson2}, \cite{collan}, \cite{georgescu3}, \cite{thavaneswaran}, \cite{zhang}.

In this section we will present after \cite{georgescu1}, \cite{georgescu2}, \cite{georgescu3} two notions of expected utility associated with a triple consisting of a utility function (representing an agent), a fuzzy number (representing the risk) and a weighting function. Also, we will recall the definition of expected value and two variances associated with a fuzzy number \cite{carlsson1}, \cite{carlsson2}, \cite{georgescu3}, \cite{zhang}.

We fix a mathematical framework consisting of three entities:

$\bullet$ a weighting function $f: [0, 1] \rightarrow {\mathbb{R}}$ ($f$ is a non-negative and increasing function that satisfies $\int_0^1 f(\gamma) d\gamma =1$);

$\bullet $ a utility function $u: {\mathbb{R}} \rightarrow {\mathbb{R}}$ of class $C^2$;

$\bullet$ a fuzzy number $A$ whose level sets have the form $[A]^\gamma=[a_1(\gamma), a_2(\gamma)]$ for all $\gamma \in [0, 1]$.

Then the support of a fuzzy number $A$ will be $supp(A)=\{x \in {\bf{R}}|A(x)>0\}=(a_1(0), a_2(0))$.

Following \cite{georgescu3} Section $4.2$ we will define the following two notions of possibilistic expected utility:

$E_1(f, u(A))=\frac{1}{2} \int_0^1 [u(a_1(\gamma))+u(a_2(\gamma))]f(\gamma) d\gamma$ (2.1)

$E_2(f, u(A))=\frac{1}{2} \int_0^1 [\frac{1}{a_2(\gamma)-a_1(\gamma)} \int_{a_1(\gamma)}^{a_2(\gamma)} u(x)dx]f(\gamma) d\gamma$ (2.2)

Setting in (2.1) or (2.2) $u=1_{\bf{R}}$ (the identity function of ${\mathbb{R}}$) one obtains the possibilistic expected value (\cite{dubois3}, \cite{carlsson1}, \cite{fuller2}, \cite{majlender}):

$E_f(A)=E_1(f, 1_{\bf{R}}(A))=E_2(f, 1_{\bf{R}}(A))=\frac{1}{2}\int_0^1 [a_1(\gamma)+a_2(\gamma)]f(\gamma)d\gamma$ (2.3)

If $supp(A) \subseteq {\bf{R}}_+$, then from (2.3) it follows that $E_f(A)\geq 0$.

For $u(x)=(x-E_f(A))^2$  two different notions of possibilistic variance follow \cite{carlsson1}, \cite{fuller2}, \cite{georgescu1}, \cite{zhang}:

$Var_1(f, A)=\frac{1}{2} \int_0^1 [(a_1(\gamma)-E_f(A))^2+(a_2(\gamma)-E_f(A))^2] f(\gamma) d\gamma$ (2.4)

$Var_2(f, A)=\int_0^1 [\frac{1}{a_2(\gamma)-a_1(\gamma)} \int_{a_1(\gamma)}^{a_2(\gamma)} (x-E_f(A))^2] f(\gamma)d\gamma$ (2.5)

\section{Expected utility operators and $D$-operators}

In this section we will recall the definitions of the expected utility operators and the $D$-operators, introduced in \cite{georgescu2}, respectively \cite{georgescu6} as an abstraction of the two possibilistic expected utilities $E_1(f, u(A))$ and $E_2(f, u(A))$ from the previous section.

Let $\mathcal{F}$ be the set of fuzzy numbers, $\mathcal{C}({\mathbb{R}})$ the set of real continuous functions (mapped from $\mathbb{R}$  to $\mathbb{R}$) and $\mathcal{U}$ a subset of $\mathcal{C}({\mathbb{R}})$ satisfying the following properties:

$(U_1)$ $\mathcal{U}$ contains constant functions and first and second degree polynomial functions;

$(U_2)$ $\mathcal{U}$ is closed under linear combinations: if $a, b \in {\mathbb{R}}$ and $g, h \in \mathcal{U}$ then $ag+bh \in \mathcal{U}$.

For each $a \in {\mathbb{R}}$ we denote $\bar a : {\mathbb{R}} \rightarrow {\mathbb{R}}$ the constant function $\bar a(x)=a$, for $x \in {\mathbb{R}}$. ${\bf{1_{R}}}$ will be the identity function of ${\mathbb{R}}$. Then $\bar a,{\bf{1_{R}}}$  belong to $\mathcal{U}$. In particular, we can consider $\mathcal{U}=\mathcal{C}({\mathbb{R}})$.

We fix a weighting function $f: [0, 1] \rightarrow {\mathbb{R}}$ and a family $\mathcal{U}$ with the properties $(U_1)$ and $(U_2)$.

\begin{definitie}
\cite{georgescu3}, \cite{georgescu4} An {\emph{(f-weighted) expected utility operator}} is a function $T: \mathcal{F} \times \mathcal{U} \rightarrow \mathbb{R}$ such that for any $a, b \in {\mathbb{R}}$, $g, h \in \mathcal{U}$ and $A \in \mathcal{F}$ the following conditions are fulfilled:

(a) $T(A, {\bf{1_{R}}})=E_f(A)$;

(b) $T(A, \bar a)=a$;

(c) $T(A, ag+bh)=aT(A, g)+bT(A, h)$;

(d) $g \leq h$ implies $T(A, g) \leq T(A, h)$.
\end{definitie}

\begin{exemplu}
\cite{georgescu1}, \cite{georgescu3} We consider the function $T_1: \mathcal{F} \times \mathcal{C}({\mathbb{R}}) \rightarrow \mathbb{R}$ defined as follows: for any fuzzy number $A$ and for any $g \in \mathcal{C(\mathbb{R})}$:

$T_1(A, g)=E_1(f, g(A))$  (3.1)

Then $T_1$ is an expected utility operator.
\end{exemplu}

\begin{exemplu}
\cite{georgescu1}, \cite{georgescu3} We consider the function $T_2: \mathcal{F} \times \mathcal{C}({\mathbb{R}}) \rightarrow \mathbb{R}$ defined as follows: for any fuzzy number $A$ and for any $g \in \mathcal{C(\mathbb{R})}$:

$T_2(A, g)=E_2(f, g(A))$  (3.2)

Then $T_2$ is an expected utility operator.
\end{exemplu}

An expected utility operator $T$ is strictly increasing if for any $A \in \mathcal{F}$ and $g, h \in {\mathcal{U}}$, $g<h$ implies $T(A, g)<T(A, h)$. One can prove that the expected utility operators $T_1, T_2$ are strictly increasing.

If $T$ is a strictly increasing operator, then for any $A \in \mathcal{F}$ and $h \in \mathcal{U}$, $h>0$ implies $T(A, h)>0$.

One notices that the axioms (a)-(d) of Definition 3.1 have been abstracted from the properties of $T_1$ and $T_2$. Therefore, the real number $T(A, g)$ will be called {\emph{generalized possibilistic expected utility}} (shortly, $T$-expected utility) and it will represent the starting point of a possibilistic  $EU$-theory associated with $T$. Sometimes, instead of $T(A, g)$ we will use the notation $T(A, g(x))$.

Particularizing $g$, from $T(A, g)$ various possibilistic indicators are obtained. By axiom (a), for $g={\bf{1_{R}}}$, the possibilistic expected value $E_f(A)$ follows. For $g(x)=(x-E_f(A))^2$ we have the notion of $T$-covariance:

$Var_T(A)=T(A, (x-E_f(A))^2)$ (3.3)

Using the axiom (d) of Definition 3.1, it follows immediately that $Var_T(A)\geq 0$.

\begin{remarca}
As we have seen previously, the two operators $T_1$ and $T_2$ introduce the possibilistic variances $Var_{T_1}(A)=Var_1(f, A)$, respectively $Var_{T_2}(A)=Var_2(f, A)$.
\end{remarca}

The two possibilistic variances $Var_{T_1}(A)$, $Var_{T_2}(A)$ have been used in the application of some models in possibilistic risk theory \cite{appadoo}, \cite{carlsson2}, \cite{collan}, \cite{georgescu5}, \cite{majlender}, \cite{thavaneswaran}.

In case of probabilistic risk, the risk aversion of an agent is described by the Arrow-Pratt index \cite{arrow0}, \cite{arrow}, \cite{pratt}: for a utility function $u$ of class $\mathcal{C}^2$, the Arrow-Pratt index is defined by

$r_u(w)=-\frac{u''(w)}{u'(w)}$ for $w \in {\mathbb{R}}$  (3.4)

If $u_1$, $u_2$ are the utility functions of the two agents, then the Arrow-Pratt theorem \cite{arrow0}, \cite{arrow}, \cite{pratt} asserts that "the agent $u_1$ is more risk-averse than the agent $u_2$" iff $r_{u_1}(w) \geq r_{u_2}(w)$ for any $w \in {\mathbb{R}}$.

Papers \cite{georgescu1}, \cite{georgescu2} contain two distinct
possibilistic treatments of  risk aversion when the risk is a fuzzy number. These possibilistic theories of risk aversion are based on the possibilistic utilities $T_1(A, u)$, $T_2(A, u)$. In particular, in both cases a Pratt type theorem is proved. A surprising result is obtained: the possibilistic risk aversion is characterized in terms of the Arrow-Pratt index. In a certain sense, we could say that "the possibilistic risk aversion (in the sense of papers \cite{georgescu1}, \cite{georgescu2}) is equivalent to the probabilistic risk aversion" \cite{arrow0}, \cite{arrow}, \cite{pratt}.

The expected utility operators allow a generalization of possibilistic risk aversion theories from \cite{georgescu1}, \cite{georgescu2}.

In this general framework it is defined what it means that "an agent is more risk-averse than another agent" and it is proved a Pratt-type theorem which characterizes this property. The main tool used in proving this result is the approximation formula from the following proposition.

\begin{propozitie}
\cite{georgescu2}, \cite{georgescu3} Let $T$ be an expected utility operator, $A$ a fuzzy number and $u$ a utility function of class $\mathcal{C}^2$. Then

$T(A, u) \approx u(E_f(A))+\frac{1}{2} u''(E_f(A))Var_T(A)$ (3.5)
\end{propozitie}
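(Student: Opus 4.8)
The plan is to expand the utility function $u$ to second order around the possibilistic expected value $E_f(A)$ and then apply the axioms of Definition 3.1 to the resulting quadratic polynomial. Concretely, one writes the Taylor approximation
\[
u(x) \approx u(E_f(A)) + u'(E_f(A))\,(x - E_f(A)) + \tfrac{1}{2}u''(E_f(A))\,(x - E_f(A))^2 ,
\]
which is the natural one to use here because, $A$ being a fuzzy number, the relevant values of $x$ (those in $supp(A)$) are clustered around $E_f(A)$. The right-hand side is a polynomial of degree at most two in $x$, hence it belongs to $\mathcal{U}$ by property $(U_1)$, so $T(A,\cdot)$ may legitimately be applied to it.

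Next I would apply $T(A,\cdot)$ to both sides and use linearity. By axiom (c) the operator distributes over the three summands. The constant term is handled by axiom (b): $T(A, \overline{u(E_f(A))}) = u(E_f(A))$. For the linear term, axiom (c) pulls out the scalar $u'(E_f(A))$, and writing $x - E_f(A) = {\bf{1_{R}}} - \overline{E_f(A)}$, axioms (c), (a), (b) give $T(A, {\bf{1_{R}}} - \overline{E_f(A)}) = E_f(A) - E_f(A) = 0$, so the linear term drops out. For the quadratic term, axiom (c) pulls out $\tfrac{1}{2}u''(E_f(A))$ and, by definition (3.3), $T(A, (x - E_f(A))^2) = Var_T(A)$. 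Collecting the three contributions yields exactly $u(E_f(A)) + \tfrac{1}{2}u''(E_f(A))\,Var_T(A)$, as claimed.

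The delicate point is the meaning of $\approx$, i.e. controlling the Taylor remainder. Since $u$ is only assumed of class $\mathcal{C}^2$, the remainder can be written in Peano form as $o\big((x-E_f(A))^2\big)$; if one wants a quantitative statement, assuming a bounded third derivative on $supp(A)$ gives a Lagrange remainder of the form $\tfrac{1}{6}u'''(\xi)(x-E_f(A))^3$. The honest conclusion is then that the error incurred by replacing $u$ with its quadratic Taylor polynomial, after applying $T$, equals $T(A,\cdot)$ of this remainder; and here the monotonicity axiom (d) is what makes the estimate work — bounding the remainder above and below by constants on $supp(A)$ and invoking (d) together with (b) sandwiches $T(A,u)$ around the quadratic approximation, so the error is controlled by the third moment of $A$ about $E_f(A)$, which is small when $A$ is sharply peaked. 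This is the only step that is not a purely algebraic consequence of (a)–(d); the rest is bookkeeping with the axioms.
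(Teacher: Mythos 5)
Your argument is correct and is exactly the standard derivation: the paper itself states Proposition 3.5 without proof, citing \cite{georgescu2} and \cite{georgescu3}, and the derivation given there is precisely your second-order Taylor expansion around $E_f(A)$ followed by an application of axioms (a)--(c), with the linear term vanishing because $T(A, x - E_f(A)) = E_f(A) - E_f(A) = 0$ and the quadratic term producing $Var_T(A)$ by (3.3). Your closing remarks on the remainder go beyond what the source establishes (the ``$\approx$'' is used informally throughout), but they are a sound account of what would be needed to make the approximation quantitative.
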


Proposition 3.5 will be used in this paper to prove the approximation formula from Theorem 5.9.

Still the treatment of other topics from risk theory in the framework offered  by expected utility operators is not possible without imposing some supplementary  conditions on those. In paper \cite{georgescu6} the $D$-operators have been introduced to study a possibilistic portfolio choice problem. We will present next the definition of the $D$-operators.

For a utility function $g(x, \lambda)$, in which $\lambda$ is a parameter, we consider the following properties:

(i) $g(x, \lambda)$ is continuous with respect to the argument $x$ and derivable with respect to the argument $\lambda$;

(ii) for any $\lambda \in {\mathbb{R}}$, the function $\frac{\partial  g(., \lambda)}{\partial \lambda} :{\mathbb{R}} \rightarrow {\mathbb{R}}$ is derivable.

\begin{definitie}
\cite{georgescu6} An expected utility operator $T: \mathcal{F} \times \mathcal{C}({\mathbb{R}}) \rightarrow {\mathbb{R}}$ is a $D$-\emph{operator} if for any fuzzy number $A$ and for any function $g(x, \lambda)$ with the properties (i) and (ii), the following axioms are fulfilled:

($D_1$) The function $\lambda \mapsto T(A, g(., \lambda))$ is derivable (with respect to $\lambda$);

($D_2$) $T(A, \frac{\partial g(., \lambda)}{\partial \lambda})=\frac{d}{d\lambda }T(A, g(., \lambda))$.
\end{definitie}

By Proposition 1 from \cite{georgescu6}, $T_1$ and $T_2$ are $D$-operators.

\begin{remarca}
Conditions ($D_1$) and ($D_2$) make it possible the use of first order conditions in solving some optimization problems in which the objective function is a $T$-expected utility. In paper \cite{georgescu6}, the $D$-operators offer the framework to find some approximate solutions of a possibilistic portfolio problem. In the next sections, the axioms $(D_1$) and ($D_2$) will be intensely used to determine the optimal coinsurance rate in a coinsurance problem formulated in the context of expected utility operators.
\end{remarca}

\begin{propozitie}
Let $T, S$ be two expected utility operators and $c \in {\mathbb{R}}$.

(a) $U=cT+(1-c)S$ is an expected utility operator;

(b) If $S, T$ are $D$-operators then $U$ is a $D$-operator.
\end{propozitie}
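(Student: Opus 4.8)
The plan is to verify directly that $U$ inherits each defining property from $T$ and $S$, exploiting that every axiom involved is stable under affine combinations of operators. For part (a) I would run through the four axioms of Definition 3.1 in turn. Axioms (a) and (b) reduce to the identity $c+(1-c)=1$: substituting $g=\mathbf{1_{R}}$ gives $U(A,\mathbf{1_{R}})=cE_f(A)+(1-c)E_f(A)=E_f(A)$, and substituting $g=\bar a$ gives $U(A,\bar a)=ca+(1-c)a=a$. For axiom (c) I would expand $U(A,ag+bh)=cT(A,ag+bh)+(1-c)S(A,ag+bh)$, apply the linearity axiom (c) to $T$ and to $S$ separately, and regroup the four resulting terms into $aU(A,g)+bU(A,h)$.

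The one step deserving attention is the monotonicity axiom (d). From $g\le h$ one gets $T(A,g)\le T(A,h)$ and $S(A,g)\le S(A,h)$, but combining these inequalities via $cT+(1-c)S$ preserves the order only when both coefficients are non-negative, i.e. when $c\in[0,1]$. So I would read the statement with the implicit convention $0\le c\le1$ (the natural range for a convex combination of operators); under this convention axiom (d) is immediate, whereas for $c$ outside $[0,1]$ axiom (d) can genuinely fail (take $g=\bar 0\le h=(x-E_f(A))^2$, where $U(A,h)=c\,Var_T(A)+(1-c)\,Var_S(A)$ can be made negative), so some restriction on $c$ is unavoidable and should be flagged.

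For part (b) I would assume in addition that $S$ and $T$ are $D$-operators — in particular their common domain is $\mathcal{F}\times\mathcal{C}(\mathbb{R})$, hence so is that of $U$ — and fix a fuzzy number $A$ together with a parametrized function $g(x,\lambda)$ satisfying (i) and (ii). Axiom $(D_1)$ for $U$ follows because $\lambda\mapsto U(A,g(.,\lambda))=cT(A,g(.,\lambda))+(1-c)S(A,g(.,\lambda))$ is an affine combination of two functions that are derivable in $\lambda$ by $(D_1)$ for $T$ and for $S$. For $(D_2)$ I would apply $(D_2)$ to $T$ and to $S$ and then use linearity of the derivative:
\begin{align*}
U\!\left(A,\tfrac{\partial g(.,\lambda)}{\partial\lambda}\right)
&=cT\!\left(A,\tfrac{\partial g(.,\lambda)}{\partial\lambda}\right)+(1-c)S\!\left(A,\tfrac{\partial g(.,\lambda)}{\partial\lambda}\right)\\
&=c\tfrac{d}{d\lambda}T(A,g(.,\lambda))+(1-c)\tfrac{d}{d\lambda}S(A,g(.,\lambda))=\tfrac{d}{d\lambda}U(A,g(.,\lambda)).
\end{align*}

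I do not expect a genuine obstacle: the whole argument is a short sequence of one-line substitutions resting on the linearity of $T$, $S$ and of differentiation. The only point to stay alert about is the compatibility of the scalars $c,1-c$ with the order-preservation axiom (d), as discussed above.
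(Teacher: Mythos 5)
Your proof is correct, and for part (b) it coincides with the paper's argument: the paper likewise dismisses $(D_1)$ as immediate and verifies $(D_2)$ by exactly the chain of equalities you write. The difference is in part (a), which the paper does not prove at all --- it simply cites Proposition 5.2.5 of an earlier monograph --- whereas you verify the four axioms directly. Your direct verification pays off: you correctly observe that the monotonicity axiom (d) is \emph{not} preserved for arbitrary $c\in\mathbb{R}$, only for $c\in[0,1]$, and your counterexample sketch is sound (taking $\bar 0\le (x-E_f(A))^2$ forces $Var_U(A)=c\,Var_T(A)+(1-c)\,Var_S(A)\ge 0$, which fails for suitable $T\ne S$ and $c$ outside $[0,1]$, e.g. $T=T_2$, $S=T_1$ and $c$ large). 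The statement as printed, with the hypothesis $c\in\mathbb{R}$, is therefore too generous, and the paper's reliance on an external citation hides this; everywhere the proposition is actually used later (Proposition 5.7, Remark 5.8 with $c=\tfrac12$) the combination is convex, so nothing downstream breaks, but the hypothesis should indeed read $c\in[0,1]$ as you flag.
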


\begin{proof}
(a) By \cite{georgescu3}, Proposition 5.2.5.

(b) The axiom ($D_1$) is immediate. We will verify ($D_2$). Since $S, T$ fulfill condition ($D_2$), we will have the following equalities:

$\frac{d}{d \lambda}U(A, g(., \lambda))=c \frac{d}{d \lambda}T(A, g(., \lambda))+(1-c) \frac{d}{d \lambda}S(A, g(., \lambda))$

\hspace{2cm}$=cT(A, \frac{\partial g(., \lambda)}{\partial \lambda})+(1-c) S(A, \frac{\partial g(., \lambda)}{\partial \lambda})$

\hspace{2cm} $=U(A,  \frac{\partial g(., \lambda)}{\partial \lambda})$

for any fuzzy number $A$ and for any utility function $g(x, \lambda)$ which fulfils hypotheses (i) and (ii).
\end{proof}

\section{The coinsurance problem}

In this section we will deal with the coinsurance problem in the context of expected utility operators. First we will introduce a few entities by which we will define this coinsurance problem, then we will restrict the universe of discussion to $D$-operators in order to start the study of optimal coinsurance rate.

Consider an agent with a utility function $u$ of class $C^2$ such that $u'>0$ and $u''<0$. Assume that the agent possesses an initial wealth subject to risk. To retrieve a part of the loss caused by this risk, the agent will close an insurance contract. By \cite{eeckhoudt}, p. 46, an insurance contract has two components:

$\bullet$  a premium $P$ to be paid by the policyholder;

$\bullet$ an indemnity schedule $I(x)$ indicates the amount to be paid by the insurer for a loss $x$.

We will think of $I(x)$ as a utility function, and the premium $P$ will be defined with respect to the mathematical modeling of risk. In case of a probabilistic model, the loss will be a random variable $X \geq 0$, and $P$ will be defined by means of (probabilistic) expected utility $EI(X)$ (see \cite{eeckhoudt}, p. 49).

The possibilistic form of the coinsurance problem from \cite{georgescu5} has as hypothesis the fact that the risk is a fuzzy number $A$ with the property that $supp(A) \subseteq {\mathbb{R}}_{+}$ and $supp(A)$ does not reduce to a single point. In particular, this hypothesis assures that $E_f(A)>0$.

To extend this to a $EU$-theory associated with an expected utility operator $T$, we will fix $T$ and a weighting function $f: [0, 1] \rightarrow {\mathbb{R}}$.

The \emph{$T$-premium for insurance indemnity} is defined by

$P=(1+\lambda)T(A, u)$ (4.1)

where $\lambda \geq 0$ is a loading factor.

\begin{remarca}
\textnormal
(i) The expression (4.1) of $P$ is inspired from the form of the premium for insurance indemnity from the probabilistic model (\cite{eeckhoudt}, p. 42).

(ii) If $T$ is the operator $T_1$ from  Example 3.2, then we obtain the notion of possibilistic premium for insurance indemnity from \cite{georgescu5}.
\end{remarca}

We will assume that $I(x)=\beta x$ for all $x$. Following the terminology from \cite{eeckhoudt}, $\beta$ will be called coinsurance rate, and $1-\beta$ will be called {\emph{retention rate}}. The coinsurance rate $\beta$ represents the fraction from the size of the loss the insured gets following an insurance contract.

Similar with \cite{eeckhoudt}, p. 49 or \cite{georgescu5}, we will make the hypothesis that the policyholder chooses {\emph{apriori}} a coinsurance rate $\beta$. Then the corresponding $T$-premium for insurance indemnity $P(\beta)$ will have the form:

$P(\beta)=(1+\lambda)T(A, \beta x)=\beta (1+\lambda) T(A, x)$.

By the axiom (a) from Definition 3.1, $P(\beta)$ will be written:

$P(\beta)=\beta(1+\lambda)E_f(A)$ (4.2)

By denoting $P_0=(1+\lambda)E_f(A)$ we will have

$P(\beta)=\beta P_0$ (4.3)

If $\beta$ is the coinsurance rate and $x$ is the size of the loss then the agent remains ultimately with the following amount:

$g(x, \beta)=w_0-P(\beta)-x+\beta x=w_0-\beta P_0-(1-\beta)x$ (4.4)

Consider the function which gives the utility of amount $g(x, \beta)$:

$h(x, \beta)=u(g(x, \beta))=u(w_0-\beta P_0-(1-\beta)x)$ (4.5)

Then

$H(\beta)=T(A, h(x, \beta))$ (4.6)

is the total $T$-utility associated with a possibilistic risk $A$, an initial wealth $w_0$ and an insurance contract with a coinsurance rate $\beta$.

Since the agent wants to maximize this total utility, he will choose $\beta$ as the solution of an optimization problem:

$\displaystyle \max_{\beta} H(\beta)$ (the coinsurance problem)  (4.7)

To be able to study the existence and the computation of the solution of problem (4.7), we will assume from now on that $T$ is a $D$-operator.

Taking into account the axioms $(D_1)$ and $(D_2)$ of Definition 3.5 we will have

$H'(\beta)=\frac{d}{d \beta}T(A, h(., \beta))=T(A, \frac{\partial }{\partial \beta}h(x, \beta)))$

Taking into account that

$\frac{\partial }{\partial \beta}h(x, \beta)=\frac{\partial }{\partial \beta}u(g(x, \beta))=u'(g(x, \beta))\frac{\partial g(x, \beta)}{\partial \beta}=u'(g(x, \beta))(x-P_0)$

the following form of the derivative of $H(\beta)$ follows:

$H'(\beta)=T(A, u'(g(x, \beta))(x-P_0))$  (4.8)

Analogously, we obtain the second derivative:

$H''(\beta)=T(A, u''(g(x, \beta))(x-P_0)^2)$  (4.9)

By hypothesis, $u''(g(x, \beta))<0$. Applying the axiom (d) of Definition 3.1, from (4.9) it follows $H''(\beta)\leq 0$, thus $H$ is a concave function.
Moreover, if the expected utility operator $T$ is strictly increasing then $H$ is strictly concave.
We can consider then the solution $\beta^\ast_T$ of the optimization problem 4.7: $H^\ast(\beta^\ast_T)=\displaystyle \max_{\beta} H(\beta)$. The determination of the optimal coinsurance rate $\beta^\ast$ and the total utility function $H(\beta^\ast)$ is one of the agent's important problems. When it exists, the optimal coinsurance rate $\beta^\ast$ verifies the first order condition $H'(\beta)=0$.  Taking into account (4.8), the first order condition $H'(\beta)=0$ will be written:

$T(A, (x-P_0)u'(g(x, \beta^\ast_T)))=0$ (4.10)

Let us consider the case of $D$-operators $T_1$ and $T_2$. By (3.1) and (3.2), the first order condition (4.10) gets the following form:

$\bullet$ for the $D$-operator $T_1$:

$E_1(f, (A-P_0)u'(g(A, \beta^\ast_{T_1})))=0$ (4.11)

$\bullet$ for the $D$-operator $T_2$:

$E_2(f, (A-P_0)u'(g(A, \beta^\ast_{T_2})))=0$  (4.12)

The coinsurance problem formulated in $EU$-theory associated with the operator $T_1$ has been studied in \cite{georgescu5}. In particular, using the first-order condition (4.11), in \cite{georgescu5} an approximate calculation formula of the optimal coinsurance has been proved.

\section{The properties and the computation of the optimal coinsurance rate}

Let $f: [a, b] \rightarrow {\mathbb{R}}$ be a weighting function, $T$ a $D$-operator, $u: {\mathbb{R}} \rightarrow {\mathbb{R}}$ a utility function
and $A$ a fuzzy number. As in the previous section, we will make the following assumptions on $u$ and $A$:

$\bullet$ $u$ is of class $\mathcal{C}^2$, $u'>0$ and $u''<0$;

$\bullet$ $supp(A) \subseteq {\mathbb{R}}_+$ and $supp(A)$ is not a point set.

According to the second hypothesis, one gets $E_f(A)>0$.

Let $\beta^\ast=\beta^\ast_T$ be the solution of the insurance problem (4.7). We will keep all notations from Section $4$.

\begin{propozitie}
(i) If $\lambda =0$ then $\beta^\ast=1$;

(ii) If $\lambda >0$ then $\beta^\ast<1$.
\end{propozitie}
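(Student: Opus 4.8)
The plan is to locate the sign of the derivative $H'$ at the endpoint $\beta = 1$ and then invoke the concavity of $H$ already established from (4.9).

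First I would evaluate $g(x,1)$. From (4.4), $g(x,1) = w_0 - P_0$, which does not depend on $x$; hence the function $x \mapsto u'(g(x,1))$ is the constant $u'(w_0 - P_0)$. Substituting $\beta = 1$ into the formula (4.8) for $H'$, the argument of $T$ becomes $u'(w_0-P_0)(x - P_0)$, i.e. a scalar multiple of the affine function $x \mapsto x - P_0$. Applying axiom (c) of Definition 3.1 (with one coefficient zero) to pull out the constant $u'(w_0 - P_0)$, and then axioms (c), (a), (b) to compute $T(A, x - P_0) = T(A, {\bf 1_R}) + T(A, \overline{-P_0}) = E_f(A) - P_0$, I obtain
$$H'(1) = u'(w_0 - P_0)\bigl(E_f(A) - P_0\bigr) = -\lambda\, E_f(A)\, u'(w_0 - P_0),$$
using $P_0 = (1+\lambda)E_f(A)$ from (4.2)--(4.3).

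Now I would split into the two cases. If $\lambda = 0$, then $H'(1) = 0$; since $H'' \le 0$ by (4.9), $H'$ is non-increasing, so $H'(\beta) \ge 0$ for $\beta \le 1$ and $H'(\beta) \le 0$ for $\beta \ge 1$, whence $H$ attains its maximum at $\beta = 1$, i.e. $\beta^\ast = 1$ (the maximizer being unique when $T$ is strictly increasing, since then $H$ is strictly concave). If $\lambda > 0$, then using $E_f(A) > 0$ (the standing hypothesis on $A$) and $u' > 0$, we get $H'(1) < 0$; by the monotonicity of $H'$ this gives $H'(\beta) \le H'(1) < 0$ for all $\beta \ge 1$, so $H$ is strictly decreasing on $[1,\infty)$ and the maximizer cannot lie there, forcing $\beta^\ast < 1$.

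The only delicate point is the extraction of the constant $u'(w_0 - P_0)$ from inside $T$; this is legitimate precisely because at $\beta = 1$ the argument collapses to a constant times an affine function, at which stage linearity applies. Everything else is a direct reading of the sign of $H'(1)$ against the concavity of $H$ recorded right after (4.9).
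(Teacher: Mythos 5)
Your proof is correct and follows essentially the same route as the paper: both evaluate $H'(1)$ by exploiting that $g(x,1)=w_0-P_0$ is constant so the linearity axioms give $H'(1)=-\lambda E_f(A)u'(w_0-P_0)$, and both then read off the two cases from the concavity of $H$. If anything, your treatment of case (i) is slightly more complete than the paper's, since you explicitly argue that the stationary point at $\beta=1$ is the maximizer rather than merely checking the first-order condition.
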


\begin{proof}
(i) Setting $\beta=1$ in (4.4), we have $g(x, 1)=w_0-P_0$. Then, by Definition 3.1 it follows

$T(A, (x-P_0)u'(g(x, 1)))=T(A, (x-P_0)u'(w_0-P_0))$

$\hspace{2cm}$ $=u'(w_0-P_0)T(A, x-P_0)$

$\hspace{2cm}$ $=u'(w_0-P_0)(T(A, x)-P_0)$

$\hspace{2cm}$ $=u'(w_0-P_0)(E_f(A)-P_0)$

$\hspace{2cm}$ $=-\lambda E_f(A) u'(w_0-P_0)$

If $\lambda= 0$ then $\beta^\ast=1$ verifies the first order condition (4.10).

(ii) Since $supp(A) \subseteq {\mathbb{R}}_+$ and $supp(A)$ is not a point set, it follows

$E_f(A)=\frac{1}{2} \int_0^1 [a_1(\gamma)+a_2(\gamma)]f(\gamma)d\gamma >0$

From $\lambda >0$, $u'>0$ and the proof of (i) we deduce

$H'(1)=T(A, (x-P_0)u'(g(x, 1)))=-\lambda E_f(A)u'(w_0-P_0)<0$

Suppose by absurdum that $\beta^\ast \geq 1$. Since $H$ is concave, its derivative $H'$ is decreasing, thus $H'(\beta) \leq H'(1)<0$. This contradicts the first order condition $H'(\beta^\ast)=0$, thus $\beta^\ast <1$.

\end{proof}

By Proposition 5.1, the inequality $\beta^\ast \leq 1$ holds. In the formulation (4.7) of the coinsurance problem no restriction has been made on $\beta$ (as in \cite{eeckhoudt}, Section 3.2, for the probabilistic coinsurance rate). Therefore, the solution $\beta^\ast$ of (4.7) may not satisfy the inequality $0< \beta^\ast \leq 1$. In the Appendix, we will establish a necessary condition for $0< \beta^\ast$.

\begin{remarca}
Proposition 5.1 is a result analogous to Mosin theorem (\cite{mossin} or \cite{eeckhoudt}, Proposition 3.1). Then when $T$ is the operator $T_1$ one obtains Proposition 4 from \cite{georgescu5}.
\end{remarca}

An exact solution for the maximization problem (4.7) is difficult to find. Therefore, it is more convenient to find approximate solutions of equation (4.10).

Before proving a formula for the approximate calculation of $\beta^\ast$, let us denote $w=w_0-P_0$. Then formula (4.4) becomes:

$g(x, \beta)=w-(1-\beta)(x-P_0)$  (5.1)

\begin{teorema}
An approximate value of the optimal $T$-coinsurance rate $\beta^\ast$ is:

$\beta^\ast \approx 1+\frac{u'(w)}{u''(w)}\frac{\lambda E_f(A)} {Var_T(A)+\lambda^2 E_f^2(A)}$
\end{teorema}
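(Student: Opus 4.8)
The plan is to start from the first-order condition (4.10), namely $T(A,(x-P_0)u'(g(x,\beta^\ast)))=0$, and Taylor-expand $u'(g(x,\beta^\ast))$ around the point $w=w_0-P_0$. Using the rewriting (5.1), $g(x,\beta^\ast)=w-(1-\beta^\ast)(x-P_0)$, so to first order $u'(g(x,\beta^\ast))\approx u'(w)-(1-\beta^\ast)(x-P_0)u''(w)$. Substituting this into (4.10) and expanding the product $(x-P_0)u'(g(x,\beta^\ast))$ gives, after collecting terms,
\[
(x-P_0)u'(g(x,\beta^\ast))\approx u'(w)(x-P_0)-(1-\beta^\ast)u''(w)(x-P_0)^2 .
\]

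Next I would apply the operator $T(A,\cdot)$ to both sides, using the linearity axiom (c) of Definition 3.1 to pull out the constants $u'(w)$, $(1-\beta^\ast)u''(w)$. For the term $T(A,(x-P_0)^2)$ I invoke the definition (3.3) of the $T$-variance together with the standard identity $(x-P_0)^2=(x-E_f(A))^2+2(E_f(A)-P_0)(x-E_f(A))+(E_f(A)-P_0)^2$; applying (a), (b), (c) and noting $T(A,x-E_f(A))=0$ yields $T(A,(x-P_0)^2)=Var_T(A)+(E_f(A)-P_0)^2$. Likewise $T(A,x-P_0)=E_f(A)-P_0$. Since $P_0=(1+\lambda)E_f(A)$, we have $E_f(A)-P_0=-\lambda E_f(A)$, hence $T(A,x-P_0)=-\lambda E_f(A)$ and $T(A,(x-P_0)^2)=Var_T(A)+\lambda^2 E_f^2(A)$. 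The first-order condition therefore becomes the scalar equation
\[
-u'(w)\,\lambda E_f(A)-(1-\beta^\ast)u''(w)\bigl(Var_T(A)+\lambda^2 E_f^2(A)\bigr)\approx 0 .
\]

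Finally I would solve this linear equation for $\beta^\ast$: since $u''(w)<0$ and $Var_T(A)+\lambda^2E_f^2(A)>0$, we may divide, obtaining
\[
1-\beta^\ast\approx -\frac{u'(w)}{u''(w)}\cdot\frac{\lambda E_f(A)}{Var_T(A)+\lambda^2 E_f^2(A)},
\]
which rearranges to the claimed formula $\beta^\ast\approx 1+\dfrac{u'(w)}{u''(w)}\dfrac{\lambda E_f(A)}{Var_T(A)+\lambda^2 E_f^2(A)}$. The step most in need of care is the justification of applying $T(A,\cdot)$ to the truncated Taylor polynomial: the operator is defined on continuous functions and the expansion is only an approximation, so strictly one is replacing $u'$ by its degree-one Taylor polynomial and discarding a remainder term; the legitimacy of doing so (and the claim that the induced error is of the same order as that already present in the approximation (3.5)) is the delicate point, handled here at the same heuristic level as Proposition 3.5. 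A secondary subtlety is that $\beta^\ast$ appears on both sides, so one should read the result as an implicit/fixed-point approximation that is consistent to first order.
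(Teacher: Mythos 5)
Your proposal is correct and follows essentially the same route as the paper's proof: a first-order Taylor expansion of $u'(g(x,\beta))$ around $w$, application of $T$ via axioms (a)--(c) to reduce the first-order condition to a linear equation in $1-\beta^\ast$, and the substitution $E_f(A)-P_0=-\lambda E_f(A)$. The only differences are cosmetic (you decompose $(x-P_0)^2$ around $E_f(A)$ where the paper expands $x^2-2P_0x+P_0^2$, and you substitute $-\lambda E_f(A)$ before solving rather than after), and your closing remark on the heuristic status of applying $T$ to the truncated expansion is a fair observation that the paper leaves implicit.
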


\begin{proof}
By (5.1), $u'(g(x, \beta))=u'(w-(1-\beta)(x-P_0))$. We consider the first-order Taylor approximation of $u'(w-(1-\beta)(x-P_0))$ around $w$:

$u'(g(x, \beta)) \approx u'(w)-(1-\beta)(x-P_0)u''(w)$

from where it follows

$(x-P_0)u'(g(x, \beta))\approx u'(w) (x-P_0)-(1-\beta)u''(w)(x-P_0)^2$

Taking into account (4.8) and Definition 3.1, we have

$H'(\beta)=T(A, (x-P_0)u'(g(x, \beta))$

$\hspace{1.1cm}$ $\approx T(A, u'(w)(x-P_0)-(1-\beta)u''(w)(x-P_0)^2)$

$\hspace{1.1cm}$ $=u'(w)T(A, x-P_0)-(1-\beta)u''(w)T(A, (x-P_0)^2)$

We notice that

$T(A, x-P_0)=T(A, x)-P_0=E_f(A)-P_0$.

$T(A, (x-P_0)^2)=T(A, x^2-2P_0x+P_0^2)$

$\hspace{2.3cm}$ $=T(A, x^2)-2P_0T(A, x)+P_0^2$

$\hspace{2.3cm}$ $=T(A, x^2)-2P_0E_f(A)+P_0^2$

$\hspace{2.3cm}$ $=(T(A, x^2)-E_f^2(A))+(E_f(A)-P_0)^2$

$\hspace{2.3cm}$ $=Var_T(A)+(E_f(A)-P_0)^2$

Replacing $T(A, x-P_0)$ and $T(A, (x-P_0)^2)$ in the approximate expression of $H'(\beta)$ one obtains:

$H'(\beta) \approx u'(w) (E_f(A)-P_0)-(1-\beta)u''(w) [Var_T(A)+(E_f(A)-P_0)^2]$

Then the first-order condition $H'(\beta^\ast)=0$ can be written

$u'(w)(E_f(A)-P_0)-(1-\beta^\ast)u''(w)[Var_T(A)+(E_f(A)-P_0)^2] \approx 0$

from where

$\beta^\ast \approx 1-\frac{u'(w)}{u''(w)}\frac{E_f(A)-P_0}{Var_T(A)+(E_f(A)-P_0)^2}$

Since $P_0=(1+\lambda)E_f(A)$, we have $E_f(A)-P_0=E_f(A)-(1+\lambda)E_f(A)=-\lambda E_f(A)$. With this, the approximate value of $\beta^\ast$ gets the form

$\beta^\ast \approx 1+\frac{u'(w)}{u''(w)}\frac{\lambda E_f(A)}{Var_T(A)+\lambda^2 E_f^2(A)}$
\end{proof}

Taking into account the definition of the Arrow-Pratt index from (3.4) one obtains

\begin{corolar}
$\beta^\ast \approx 1-\frac{\lambda }{r_u(w)}\frac{E_f(A)}{Var_T(A)+\lambda^2 E_f^2(A)}$  (5.3)
\end{corolar}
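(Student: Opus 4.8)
The plan is to derive Corollary 5.8 directly from the approximation formula established in Theorem 5.7, namely
\[
\beta^\ast \approx 1+\frac{u'(w)}{u''(w)}\frac{\lambda E_f(A)}{Var_T(A)+\lambda^2 E_f^2(A)},
\]
by rewriting the factor $\frac{u'(w)}{u''(w)}$ in terms of the Arrow-Pratt index $r_u(w)$ defined in (3.4). This is a purely algebraic substitution, so no new analytic ideas are needed; the only thing to watch is the sign.

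First I would recall from (3.4) that $r_u(w)=-\frac{u''(w)}{u'(w)}$. Since the hypotheses of Section 5 stipulate $u'>0$ and $u''<0$, we have $r_u(w)>0$, and in particular $u''(w)\neq 0$, so we may legitimately invert: $\frac{u'(w)}{u''(w)}=-\frac{1}{r_u(w)}$. Substituting this into the conclusion of Theorem 5.7 gives
\[
\beta^\ast \approx 1+\left(-\frac{1}{r_u(w)}\right)\frac{\lambda E_f(A)}{Var_T(A)+\lambda^2 E_f^2(A)}
      = 1-\frac{\lambda}{r_u(w)}\frac{E_f(A)}{Var_T(A)+\lambda^2 E_f^2(A)},
\]
which is exactly (5.3).

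There is essentially no obstacle here: the work has all been done in Theorem 5.7, and the corollary is a restatement in terms of the economically meaningful quantity $r_u(w)$. The only minor points worth remarking on in the write-up are that the denominator $Var_T(A)+\lambda^2 E_f^2(A)$ is strictly positive (since $Var_T(A)\geq 0$ by the remark following (3.3), and $E_f(A)>0$ by the standing hypothesis $supp(A)\subseteq\mathbb{R}_+$ with non-degenerate support, so for $\lambda>0$ the expression is positive; and one takes $w=w_0-P_0$ as fixed in (5.1)), so the fraction in (5.3) is well defined, and that the sign flip from $+\frac{u'(w)}{u''(w)}$ to $-\frac{1}{r_u(w)}$ is precisely what reflects the intuition that a more risk-averse agent (larger $r_u$) chooses a $\beta^\ast$ closer to $1$. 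Hence the proof is a single displayed line of substitution.
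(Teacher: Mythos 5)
Your proposal is correct and matches the paper's own derivation: the paper introduces Corollary 5.4 with the single phrase ``Taking into account the definition of the Arrow-Pratt index from (3.4) one obtains,'' which is exactly your substitution $\frac{u'(w)}{u''(w)}=-\frac{1}{r_u(w)}$ into the conclusion of Theorem 5.3. Your added remarks on the sign and the positivity of the denominator are harmless elaborations of the same one-line argument.
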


By particularizing the operator $T$, different approximation formulas of the optimal coinsurance rate $\beta$ are obtained from (5.3). If $T$ is the $D$-operator $T_1$ from Example 3.2, then the approximation formula (22) from \cite{georgescu5} is obtained.

\begin{remarca}
The approximate value of $\beta^\ast$ given by (5.3) gives us the way the optimal $T$-insurance depends on the risk aversion of the agent who closes the insurance contract, as well as it depends on the expected value and the variance of the fuzzy number representing the risk. The following result will give a more precise form of the relation between the risk aversion and the $T$-coinsurance rate: an increase in risk aversion will generate an increase in coinsurance rate.
\end{remarca}

We consider two agents whose utility functions $u_1$, $u_2$ are of class $C^2$ and verify the conditions $u'_1>0$, $u'_2>0$, $u''_1 <0$ and $u''_2<0$. Let $\beta_1^\ast$, $\beta_2^\ast$ be the optimal $T$-coinsurance rates associated with the utility functions $u_1$, $u_2$, the weighting function $f$, the $D$-operator $T$ and the fuzzy number $A$.

\begin{propozitie}
If the agent $u_1$ is more risk-averse than $u_2$, then $\beta_1^\ast >\beta_2^\ast$.
\end{propozitie}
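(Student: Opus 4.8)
The plan is to convert the claim into a sign statement for $H_2'$ at the point $\beta_1^\ast$, where $H_i(\beta)=T(A,u_i(g(x,\beta)))$ denotes the total $T$-utility (4.6) of agent $i$, and then to invoke the Arrow--Pratt characterization of ``$u_1$ is more risk-averse than $u_2$''. Throughout I take $\lambda>0$ (if $\lambda=0$ then $\beta_1^\ast=\beta_2^\ast=1$ by Proposition 5.1 and there is nothing to prove), so Proposition 5.1(ii) gives $\beta_1^\ast<1$, i.e. $1-\beta_1^\ast>0$; note also that $P_0=(1+\lambda)E_f(A)$ and $w=w_0-P_0$ are the same for both agents, hence so is $g(x,\beta)=w-(1-\beta)(x-P_0)$ from (5.1). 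Since $H_2$ is concave by (4.9) and $H_2'(\beta_2^\ast)=0$ by (4.10), it is enough to show $H_2'(\beta_1^\ast)\le 0$: if one had $\beta_1^\ast<\beta_2^\ast$, monotonicity of $H_2'$ would force $H_2'(\beta_1^\ast)\ge H_2'(\beta_2^\ast)=0$, hence $H_2'(\beta_1^\ast)=0$, and for a strictly increasing $T$ (so $H_2$ strictly concave) this contradicts $\beta_1^\ast\ne\beta_2^\ast$.

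Next I would use the Arrow--Pratt theorem to get an increasing concave $\phi$ with $u_1=\phi\circ u_2$, so that $u_2'=u_1'/(\phi'\circ u_2)$. Put $\psi(x)=\phi'\bigl(u_2(g(x,\beta_1^\ast))\bigr)>0$. Because $x\mapsto g(x,\beta_1^\ast)$ has slope $-(1-\beta_1^\ast)<0$, $u_2$ is increasing and $\phi'$ is decreasing, $\psi$ is increasing in $x$; moreover $g(P_0,\beta_1^\ast)=w$, so $\psi(P_0)=\phi'(u_2(w))=:c>0$, whence $\psi(x)\ge c$ for $x\ge P_0$ and $\psi(x)\le c$ for $x\le P_0$. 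Then I would consider the continuous function
\[
q(x)=(x-P_0)\,u_1'(g(x,\beta_1^\ast))\Bigl(\tfrac1{\psi(x)}-\tfrac1c\Bigr)=(x-P_0)u_2'(g(x,\beta_1^\ast))-\tfrac1c\,(x-P_0)u_1'(g(x,\beta_1^\ast)),
\]
the second equality using $u_1'(g)/\psi(x)=u_2'(g)$. Comparing the sign of $x-P_0$ with that of $\frac1{\psi(x)}-\frac1c$ (they are opposite, or both zero) and using $u_1'(g)>0$ shows $q(x)\le 0$ for all $x$, i.e. $q\le\bar 0$.

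Finally, axiom (d) of Definition 3.1 gives $T(A,q)\le T(A,\bar 0)=0$, while axiom (c) gives $T(A,q)=T(A,(x-P_0)u_2'(g(x,\beta_1^\ast)))-\frac1c\,T(A,(x-P_0)u_1'(g(x,\beta_1^\ast)))$; the last term is $\frac1c H_1'(\beta_1^\ast)$, which vanishes by the first-order condition (4.10) written for $u_1$. Hence $H_2'(\beta_1^\ast)=T(A,(x-P_0)u_2'(g(x,\beta_1^\ast)))=T(A,q)\le 0$, and by the first paragraph $\beta_1^\ast\ge\beta_2^\ast$. The strict inequality would come from reading ``more risk-averse'' strictly, so that $\phi'$ is strictly decreasing on the range of $u_2\circ g$, hence $\psi$ is strictly increasing on a subinterval of $supp(A)$, $q<0$ there, and strict monotonicity of $T$ (valid for $T_1,T_2$) turns $T(A,q)\le 0$ into $T(A,q)<0$.

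The main obstacle will be precisely this last upgrade to a strict inequality: the weak statement $\beta_1^\ast\ge\beta_2^\ast$ follows from the operator axioms (a)--(d) alone, but ``$q\le 0,\ q\not\equiv 0\ \Rightarrow\ T(A,q)<0$'' is not literally the stated strict-monotonicity property of $T$ (which is phrased for functions with $g<h$ everywhere) and needs either the explicit form of $T_1,T_2$ or an additional strict-positivity property of $D$-operators. A quicker, if only heuristic, alternative is to read $\beta_1^\ast>\beta_2^\ast$ directly off the approximation formula (5.3): the factor $\dfrac{\lambda E_f(A)}{Var_T(A)+\lambda^2E_f^2(A)}$ is positive and $w$ is common to both agents, so $r_{u_1}(w)>r_{u_2}(w)$ forces the approximate value of $\beta_1^\ast$ to exceed that of $\beta_2^\ast$.
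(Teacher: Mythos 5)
Your main argument is correct (up to the strictness caveat you yourself flag), but it follows a genuinely different route from the paper. The paper's proof is precisely your closing ``heuristic alternative'': it writes the approximation of Corollary 5.4 for each agent, observes that $w$, $E_f(A)$ and $Var_T(A)$ are common to both and that the factor $\frac{\lambda E_f(A)}{Var_T(A)+\lambda^2E_f^2(A)}$ is nonnegative, and concludes $\beta_1^\ast\ge\beta_2^\ast$ from $r_{u_1}(w)\ge r_{u_2}(w)$ --- two lines, but it only compares the first-order Taylor approximations of the two rates, and it only delivers the weak inequality even though the statement asserts a strict one. Your main proof is instead an exact comparative-statics argument: writing $u_1=\phi\circ u_2$ with $\phi$ increasing and concave, exploiting the single crossing of $\psi(x)=\phi'\bigl(u_2(g(x,\beta_1^\ast))\bigr)$ at $x=P_0$ to show $q\le 0$, and then using axioms (c), (d) and the first-order condition for $u_1$ to obtain $H_2'(\beta_1^\ast)\le 0$, hence $\beta_1^\ast\ge\beta_2^\ast$ by concavity of $H_2$. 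What this buys is an approximation-free proof from the operator axioms alone --- the possibilistic analogue of the classical probabilistic argument --- at the modest cost of needing strict concavity of $H_2$ (so $T$ strictly increasing, as the paper itself assumes when it wants uniqueness of $\beta^\ast$) to close the contrapositive in your first paragraph. Your diagnosis of the obstacle to upgrading $\ge$ to $>$ is accurate, and you should not feel it is a gap on your side: the paper's own proof suffers from the same defect, ending with ``it follows immediately $\beta_1^\ast\ge\beta_2^\ast$'' despite the strict inequality in the statement.
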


\begin{proof}
We consider the Arrow-Pratt indices of the utility functions $u_1$, $u_2$:

$r_{u_1}(w)=-\frac{u''_1(w)}{u'_1(w)}$, $r_{u_2}(w)=-\frac{u''_2(w)}{u'_2(w)}$

By $u'_1>0$, $u'_2>0$, $u''_1 <0$ and $u''_2<0$ we have $r_{u_1}(w)>0$ and $r_{u_2}(w)>0$ for any $w$. We apply Corollary 5.4 in case of the coinsurance problem corresponding to the agents $u_1$ and $u_2$:

$\beta^\ast_1 \approx 1-\frac{\lambda}{r_{u_1}(w)}\frac{E_f(A)}{Var_T(A)+\lambda^2 E_f^2(A)}$  (5.4)

$\beta^\ast_2 \approx 1-\frac{\lambda}{r_{u_2}(w)}\frac{E_f(A)}{Var_T(A)+\lambda^2 E_f^2(A)}$  (5.5)

By hypothesis, $r_{u_1}(w) \geq r_{u_2}(w)>0$, thus $0<\frac{\lambda}{r_{u_1}(w)}<\frac{\lambda}{r_{u_2}(w)}$. Since $E_f(A)>0$ and $Var_T(A) \geq 0$, from (5.4) and (5.5) it follows immediately $\beta^\ast_1 \geq \beta^\ast_2$.
\end{proof}

Let $T, S$ be two $T$-operators and $c \in {\mathbb{R}}$. By Proposition 3.8, $U=cT+(1-c)S$ is a $D$-operator. We consider the coinsurance problems associated with the $D$-operators $T, S, U$ and in rest, keeping the same data which define the coinsurance problem (4.7).

\begin{propozitie}
Let $\beta^\ast_T$, $\beta^\ast_S$ and $\beta^\ast_U$ the optimal coinsurance rates corresponding to the $D$-operators $T, S, U$. Then

$\beta^\ast_U \approx 1-\frac{1}{\frac{c}{1-\beta^\ast_T}+\frac{1-c}{1-\beta^\ast_S}}$ (5.6)
\end{propozitie}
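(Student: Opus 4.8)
The key is that each optimal rate satisfies (approximately) the first–order formula from Theorem 5.2, and that this formula depends on the operator only through its variance $Var$. Write $\mu = E_f(A)$ and note $\mu > 0$. By Theorem 5.2 applied to each of the three $D$-operators,
$$1-\beta^\ast_T \approx -\frac{u'(w)}{u''(w)}\,\frac{\lambda\mu}{Var_T(A)+\lambda^2\mu^2},\qquad 1-\beta^\ast_S \approx -\frac{u'(w)}{u''(w)}\,\frac{\lambda\mu}{Var_S(A)+\lambda^2\mu^2},$$
and similarly for $U$. Since $u''(w) < 0$ and $u'(w) > 0$, the common factor $-\tfrac{u'(w)}{u''(w)}\lambda\mu$ is a positive constant; call it $K$. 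Then $1-\beta^\ast_T \approx K/(Var_T(A)+\lambda^2\mu^2)$, so taking reciprocals gives $Var_T(A)+\lambda^2\mu^2 \approx K/(1-\beta^\ast_T)$, and likewise for $S$ and $U$.

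The second ingredient is that the variance is affine in the operator. Indeed $Var_U(A) = U(A,(x-\mu)^2) = cT(A,(x-\mu)^2) + (1-c)S(A,(x-\mu)^2) = c\,Var_T(A) + (1-c)\,Var_S(A)$ by the definition of $U$ and (3.3). Adding and subtracting $\lambda^2\mu^2 = c\lambda^2\mu^2 + (1-c)\lambda^2\mu^2$ gives
$$Var_U(A)+\lambda^2\mu^2 = c\bigl(Var_T(A)+\lambda^2\mu^2\bigr) + (1-c)\bigl(Var_S(A)+\lambda^2\mu^2\bigr).$$
Now substitute the three reciprocal identities from the first paragraph:
$$\frac{K}{1-\beta^\ast_U} \approx c\cdot\frac{K}{1-\beta^\ast_T} + (1-c)\cdot\frac{K}{1-\beta^\ast_S}.$$
Cancelling $K$ and taking reciprocals once more yields
$$1-\beta^\ast_U \approx \frac{1}{\dfrac{c}{1-\beta^\ast_T}+\dfrac{1-c}{1-\beta^\ast_S}},$$
which is exactly (5.6).

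The only genuine obstacle is bookkeeping around the approximations: the identities in the first paragraph are approximate (Taylor-based, via Theorem 5.2), so I should be careful to present (5.6) as an approximate equality derived by substituting one approximation into another, rather than claiming an exact chain of equalities. I would also note the implicit nondegeneracy assumptions needed for the reciprocals to make sense: $\beta^\ast_T \ne 1$ and $\beta^\ast_S \ne 1$, which (by Proposition 5.1(i)) is the case precisely when the loading factor $\lambda$ is strictly positive. Everything else — the affinity of $Var_U$ in the operator, and the purely algebraic reciprocal manipulations — is routine and follows directly from Definition 3.1(c), the definition of $U$, and formula (3.3).
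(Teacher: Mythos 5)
Your proof is correct and follows essentially the same route as the paper: apply the approximation formula (5.3) to each of $T$, $S$, $U$, invert to express $Var(A)+\lambda^2E_f^2(A)$ in terms of $\frac{1}{1-\beta^\ast}$, and use the affinity $Var_U(A)=c\,Var_T(A)+(1-c)\,Var_S(A)$ (which the paper cites from an external reference and you derive directly from the definition of $U$ and (3.3) --- an equivalent, slightly more self-contained step). Your added remarks on the approximate nature of the chain and the need for $\lambda>0$ so that $1-\beta^\ast_T$ and $1-\beta^\ast_S$ are nonzero are sensible and consistent with Proposition 5.1.
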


\begin{proof}
By (5.3) we have the following approximate values of $\beta^\ast_T$, $\beta^\ast_S$ and $\beta^\ast_U$:

$\beta^\ast_T \approx 1 -\frac{\lambda}{r_u(w)} \frac{E_f(A)}{Var_T(A)+\lambda^2 E_f^2(A)}$

$\beta^\ast_S \approx 1 -\frac{\lambda}{r_u(w)} \frac{E_f(A)}{Var_S(A)+\lambda^2 E_f^2(A)}$

$\beta^\ast_U \approx 1 -\frac{\lambda}{r_u(w)} \frac{E_f(A)}{Var_U(A)+\lambda^2 E_f^2(A)}$

from where it follows:

$\frac{1}{1-\beta^\ast_T} \approx \frac{r_u(w)}{\lambda}\frac{Var_T(A)+\lambda^2 E_f^2(A)}{E_f(A)}$  (5.7)

$\frac{1}{1-\beta^\ast_S} \approx \frac{r_u(w)}{\lambda}\frac{Var_S(A)+\lambda^2 E_f^2(A)}{E_f(A)}$   (5.8)

$\frac{1}{1-\beta^\ast_U}\approx \frac{r_u(w)}{\lambda}\frac{Var_U(A)+\lambda^2 E_f^2(A)}{E_f(A)}$   (5.9)

By \cite{georgescu3}, Proposition 5.1.5, we have $Var_U(A)=cVar_T(A)+(1-c)Var_S(A)$. Then, by taking into account (5.7)-(5.9) the following equalities hold:

$\frac{c}{1-\beta^\ast_T}+\frac{1-c}{1-\beta^\ast_S} \approx \frac{r_u(w)}{\lambda} \frac{cVar_T(A)+(1-c)Var_S(A)+\lambda^2E_f^2(A)}{E_f(A)}$

$\hspace{2cm}$ $\approx \frac{r_u(w)}{\lambda} \frac{Var_U(A)+\lambda^2 E_f^2(A)}{E_f(A)}$

$\hspace{2cm}$ $\approx \frac{1}{1-\beta^\ast_U}$

From here it follows

$\beta_U^\ast \approx 1-\frac{1}{\frac{c}{1-\beta^\ast_T}+\frac{1-c}{1-\beta^\ast_S}}$.

\end{proof}

\begin{remarca}
The previous proposition allows to obtain the optimal coinsurance rates for all convex combinations of two $D$-operators. In particular, if we take $c=\frac{1}{2}$ then $U=\frac{1}{2}T+\frac{1}{2}S$ and by (5.6), the optimal coinsurance rate of $U$ will be:

$\beta^\ast_U \approx 1-\frac{2}{\frac{1}{1-\beta^\ast_T}+\frac{1}{1-\beta^\ast_S}}$ (5.10)
\end{remarca}

The approximation formula of the optimal insurance $\beta^\ast$ from Corollary 5.4 will be used in the following to approximate the total expected utility $H(\beta^\ast)=T(A, h(x, \beta^\ast))$.

\begin{teorema}
The total expected utility $H(\beta^\ast)$ corresponding to the optimal coinsurance rate $\beta^\ast$ can be approximated by

$H(\beta^\ast) \approx u(w+\frac{1}{r_u(w)}\frac{\lambda^2 E_f^2(A)}{Var_T(A)+\lambda^2 E_f^2(A)})+$

$\frac{\lambda^2}{2 r_u^2(w)}\frac{E_f^2(A) Var_T(A}{(Var_T(A)+\lambda^2 E_f^2(A))^2}u''(w+\frac{1}{r_u(w)}\frac{\lambda^2 E_f^2(A)}{Var_T(A)+\lambda^2E_f^2(A)})$
\end{teorema}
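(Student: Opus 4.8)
The plan is to apply the approximation formula of Proposition 3.5 to the utility function
\[
v(x) = h(x, \beta^\ast) = u\bigl(w - (1-\beta^\ast)(x - P_0)\bigr),
\]
which is of class $\mathcal{C}^2$ since $u$ is and $g(\cdot, \beta^\ast)$ is affine in $x$. By Proposition 3.5,
\[
H(\beta^\ast) = T(A, v) \approx v(E_f(A)) + \tfrac{1}{2}\, v''(E_f(A))\, Var_T(A),
\]
so the proof reduces to evaluating $v$ and $v''$ at $E_f(A)$ and then substituting the approximate value of $\beta^\ast$ (equivalently of $1-\beta^\ast$) from Corollary 5.4.

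For the first term, recall that $P_0 = (1+\lambda)E_f(A)$, hence $E_f(A) - P_0 = -\lambda E_f(A)$, which gives $v(E_f(A)) = u\bigl(w + (1-\beta^\ast)\lambda E_f(A)\bigr)$. Corollary 5.4 yields
\[
1 - \beta^\ast \approx \frac{\lambda}{r_u(w)}\,\frac{E_f(A)}{Var_T(A) + \lambda^2 E_f^2(A)},
\]
so that $(1-\beta^\ast)\lambda E_f(A) \approx \dfrac{1}{r_u(w)}\dfrac{\lambda^2 E_f^2(A)}{Var_T(A) + \lambda^2 E_f^2(A)}$, which is precisely the argument of $u$ appearing in the first summand of the asserted formula.

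For the second term, differentiating $v(x) = u\bigl(w - (1-\beta^\ast)(x - P_0)\bigr)$ twice in $x$ gives $v''(x) = (1-\beta^\ast)^2 u''\bigl(w - (1-\beta^\ast)(x - P_0)\bigr)$, hence $v''(E_f(A)) = (1-\beta^\ast)^2 u''\bigl(w + (1-\beta^\ast)\lambda E_f(A)\bigr)$. Substituting the Corollary 5.4 expression for $(1-\beta^\ast)^2$ and for the inner argument $w + (1-\beta^\ast)\lambda E_f(A)$ converts $\tfrac{1}{2} v''(E_f(A)) Var_T(A)$ into
\[
\frac{\lambda^2}{2\, r_u^2(w)}\,\frac{E_f^2(A)\, Var_T(A)}{\bigl(Var_T(A) + \lambda^2 E_f^2(A)\bigr)^2}\; u''\!\left(w + \frac{1}{r_u(w)}\frac{\lambda^2 E_f^2(A)}{Var_T(A) + \lambda^2 E_f^2(A)}\right),
\]
and adding this to the first term produces the claimed expression for $H(\beta^\ast)$.

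There is no genuine obstacle here beyond careful bookkeeping; the one delicate point is conceptual rather than computational — the formula layers two distinct approximations (the quadratic approximation of Proposition 3.5 together with the approximate value of $\beta^\ast$ from Corollary 5.4), so I would state the result as an approximation consistent with those two and not attempt to track the resulting error terms quantitatively.
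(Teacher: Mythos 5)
Your proposal is correct and follows essentially the same route as the paper's own proof: define $v(x)=h(x,\beta^\ast)$, apply the second-order approximation of Proposition 3.5 to $T(A,v)$, and then substitute the Corollary 5.4 expression for $1-\beta^\ast$ into $v(E_f(A))$ and $v''(E_f(A))=(1-\beta^\ast)^2 u''(\cdot)$. Your closing remark about the two layered approximations is a fair caveat that the paper leaves implicit.
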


\begin{proof}
We consider the unidimensional function

$v(x)=h(x, \beta^\ast)=u(w-(1-\beta^\ast)(x-P_0))$  (5.11)

One will notice that $H(\beta^\ast)=T(A, v)$. $v$ is a utility function of class $\mathcal{C}^2$ thus we can apply the approximation formula (3.5):

$H(\beta^\ast) \approx v(E_f(A))+\frac{v''(E_f(A))}{2}Var_T(A)$ (5.12)

Since $E_f(A)-P_0=-\lambda E_f(A)$ it follows the approximation

$w-(1-\beta^\ast)(E_f(A)-P_0)=w+\lambda (1-\beta^\ast)E_f(A)$  (5.13)

By Corollary 5.4 one has $1-\beta^\ast \approx \frac{\lambda}{r_u(w)} \frac{ E_f(A)}{Var_T(A)+\lambda^2 E_f^2(A)}$ (5.14)

Deriving twice in (5.11) it follows

$v''(x)=(1-\beta^\ast)^2 u''(w-(1-\beta^\ast{^2})(x-P_0))$ (5.15)

From (5.11), (5.15) and (5.14) we can deduce

$v(E_f(A)) \approx u(w+\frac{1}{r_u(w)}
\frac{\lambda^2 E_f^2(A)}{Var_T(A)+\lambda^2 E_f^2(A)})$  (5.16)

$v''(E_f(A)) \approx \frac{\lambda^2 }{r_u^2(w)}\frac{E_f^2(A)}{(Var_T(A)+\lambda^2E_f^2(A))^2}u''(w+\frac{1}{r_u(w)}\frac{\lambda^2 E_f^2(A)}{Var_T(A)+\lambda^2 E_f^2(A)})$  (5.17)

Replacing $v(E_f(A))$ and $v''(E_f(A))$ with their approximate values from (5.14) and (5.17) the approximation formula from the statement of the theorem follows.
\end{proof}

\section{Particular cases and examples}

In this section we will study the optimal $T$-coinsurance rate for some particular $D$-operators, making the following assumptions on the weighting function $f$ and the fuzzy number $A$:

$\bullet$ $f(t)=2t$, for any $t \in [0, 1]$;

$\bullet$ $A$ is the triangular fuzzy number $(a, \alpha, \beta)$:

$$
A(t) = \left\{
        \begin{array}{ll}
            1-\frac{a-t}{\alpha} & \quad a-\alpha \leq t \leq a \\
            1- \frac{t-a}{\beta}   & \quad a \leq t \leq a+ \beta\\
            0    & \quad otherwise
        \end{array}
    \right.
$$

As to the $D$-operator $T$, we will consider the following particular cases:

(a) \emph{$T$ is the $D$-operator $T_1$.} By \cite{georgescu3}, Examples 3.3.10 and 3.4.10 we have

$E_f(A)=a+\frac{\beta-\alpha}{6}$ (6.1)

$Var_{T_1}(A)=\frac{\alpha^2+\beta^2+\alpha \beta}{18}$ (6.2)

Replacing $E_f(A)$ and $Var_{T_1}(A)$ in (5.3), the optimal $T_1$-coinsurance rate $\beta^\ast_1=\beta^\ast_{T_1}$ gets the form:

$\beta^\ast_1 \approx 1- \frac{\lambda }{r_u(w)} \frac{a+\frac{\beta-\alpha}{6}}{\frac{\alpha^2+\beta^2+\alpha \beta}{18}+\lambda^2
(a+\frac{\beta-\alpha}{6})^2}$  (6.3)

(b) \emph{$T$ is the $D$-operator $T_2$.} By \cite{georgescu3}, Example 3.4.10 we have

$Var_{T_2}(A)=\frac{\alpha^2+\beta^2}{36}$ (6.4)

Replacing $E_f(A)$ and $Var_{T_2}(A)$ in (5.3), the optimal $T_2$-coinsurance rate $\beta^\ast_{T_2}$ becomes:

$\beta^\ast_2 \approx 1- \frac{\lambda }{r_u(w)} \frac{a+\frac{\beta-\alpha}{6}}{\frac{\alpha^2+\beta^2}{36}+\lambda^2
(a+\frac{\beta-\alpha}{6})^2}$  (6.5)

(c) We consider the $D$-operator $U=\frac{1}{2}T_1+\frac{1}{2}T_2$  (by Proposition 3.8). For the computation of the optimal $U$-coinsurance rate $\beta^\ast_u$ we will  recall the formula from Remark 5.8. Using (6.3) and (6.5) one obtains:

$\frac{1}{1-\beta^\ast_1}+\frac{1}{1-\beta^\ast_2} \approx \frac{r_u(w)}{\lambda} \frac{\frac{\alpha^2+\beta^2+\alpha \beta}{18}+\frac{\alpha^2+\beta^2}{36}+2 \lambda^2(a+\frac{\beta-\alpha}{6})^2}{a+\frac{\beta-\alpha}{6}}$

$=\frac{r_u(w)}{\lambda} \frac{\frac{(\alpha+\beta)^2+2(\alpha^2+\beta^2)}{36}+2 \lambda^2(a+\frac{\beta-\alpha}{6})^2}{a+\frac{\beta-\alpha}{6}}$

By Remark 5.8, one gets

$\beta^\ast_U=1-\frac{2\lambda}{r_u(w)}\frac{a+\frac{\beta-\alpha}{6}}{\frac{(\alpha+\beta)^2+2(\alpha^2+\beta^2)}{36}+2 \lambda^2(a+\frac{\beta-\alpha}{6})^2}$.  (6.6)

We ask the problem of comparing the two coinsurance rates $\beta^\ast_1$, $\beta^\ast_2$ from (6.3) and (6.5). First we notice that if $\lambda=0$, then by Proposition 5.1 (i), we have $\beta^\ast_1=\beta^\ast_2=1$.

\begin{propozitie}
If $\lambda>0$ then there is the following dependence relation between $\beta^\ast_1$ and $\beta^\ast_2$:

$\frac{1}{1-\beta^\ast_1}-\frac{1}{1-\beta^\ast_2} \approx \frac{(\alpha+\beta)^2}{36\lambda E_f(A)} r_u(w)$ (6.7)
\end{propozitie}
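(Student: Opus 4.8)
The plan is to reduce everything to the reciprocal form of Corollary 5.4, which was already isolated inside the proof of Proposition 5.7. Specializing equations (5.7) and (5.8) of that proof to the operators $T_1$ and $T_2$, keeping a common utility function $u$, the same weighting function $f$ and the same fuzzy number $A$, I would start from
\[
\frac{1}{1-\beta^\ast_1} \approx \frac{r_u(w)}{\lambda}\,\frac{Var_{T_1}(A)+\lambda^2 E_f^2(A)}{E_f(A)}, \qquad \frac{1}{1-\beta^\ast_2} \approx \frac{r_u(w)}{\lambda}\,\frac{Var_{T_2}(A)+\lambda^2 E_f^2(A)}{E_f(A)}.
\]
Note that the hypothesis $\lambda>0$ together with $E_f(A)>0$ is exactly what makes these reciprocals meaningful; for $\lambda=0$ Proposition 5.1(i) forces $\beta^\ast_1=\beta^\ast_2=1$ and the left-hand side of (6.7) is not defined, which is why the case $\lambda>0$ is singled out.

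Next I would subtract the two approximate identities. The terms $\frac{r_u(w)}{\lambda}\,\frac{\lambda^2 E_f^2(A)}{E_f(A)}$ are identical in both and cancel, leaving
\[
\frac{1}{1-\beta^\ast_1} - \frac{1}{1-\beta^\ast_2} \approx \frac{r_u(w)}{\lambda E_f(A)}\bigl(Var_{T_1}(A) - Var_{T_2}(A)\bigr).
\]
Then I would insert the explicit variances of the triangular fuzzy number $(a,\alpha,\beta)$ under $f(t)=2t$, namely $Var_{T_1}(A)=\frac{\alpha^2+\beta^2+\alpha\beta}{18}$ from (6.2) and $Var_{T_2}(A)=\frac{\alpha^2+\beta^2}{36}$ from (6.4). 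A one-line reduction to a common denominator gives $Var_{T_1}(A)-Var_{T_2}(A)=\frac{2(\alpha^2+\beta^2+\alpha\beta)-(\alpha^2+\beta^2)}{36}=\frac{(\alpha+\beta)^2}{36}$, and substituting this into the displayed difference produces precisely formula (6.7).

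There is no genuine obstacle here: once Corollary 5.4 and the variance values (6.2), (6.4) are in hand, the proof is a cancellation followed by the identity $\alpha^2+\beta^2+2\alpha\beta=(\alpha+\beta)^2$. The only thing to be careful about in the write-up is the clash of notation — the letter $\beta$ plays two roles in this section (the right spread of the triangular number and, in general, a coinsurance rate) — so I would consistently write the optimal rates as $\beta^\ast_1,\beta^\ast_2$ and reserve the bare $\beta$ for the fuzzy-number parameter, as is already done in (6.1)--(6.6).
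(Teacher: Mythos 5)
Your proof is correct and follows essentially the same route as the paper's: both invert the approximation of Corollary 5.4 (equivalently, formulas (6.3) and (6.5)) for $T_1$ and $T_2$, subtract so that the $\lambda^2 E_f^2(A)$ terms cancel, and conclude via $Var_{T_1}(A)-Var_{T_2}(A)=\frac{(\alpha+\beta)^2}{36}$. The only cosmetic difference is that the paper writes the two relations with the variance side isolated before subtracting, whereas you isolate the reciprocals $\frac{1}{1-\beta^\ast_i}$ first; the computation is identical.
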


\begin{proof}
Formulas (6.3) and (6.5) can be written

$\frac{\alpha^2+\beta^2+\alpha \beta}{18}+\lambda^2 E_f^2(A) \approx \frac{\lambda E_f(A)}{r_u(w)(1-\beta^\ast_1)}$

$\frac{\alpha^2+\beta^2}{36}+\lambda^2 E_f^2(A) \approx \frac{\lambda E_f(A)}{r_u(w)(1-\beta^\ast_2)}$

(By Proposition 5.1 (ii), $1-\beta^\ast_1>0$ and $1-\beta^\ast_2>0$)

By subtotal, from the two previous relations it follows:

$\frac{(\alpha+\beta)^2}{36} \approx \frac{\lambda E_f(A)}{r_u(w)}[\frac{1}{1-\beta^\ast_1}-\frac{1}{1-\beta^\ast_2} ]$

which implies (6.7).
\end{proof}

\begin{corolar}
If $\lambda >0$ then $\beta^\ast_1
 > \beta^\ast_2$.
\end{corolar}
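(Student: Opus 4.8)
The plan is to read the conclusion off directly from the dependence relation (6.7) established in Proposition 6.1, using only a sign analysis. First I would check that the right-hand side of (6.7), namely $\frac{(\alpha+\beta)^2}{36\lambda E_f(A)}\,r_u(w)$, is strictly positive: we have $(\alpha+\beta)^2>0$ because $supp(A)=(a-\alpha,a+\beta)$ is not a point set; $\lambda>0$ by hypothesis; $E_f(A)>0$ because $supp(A)\subseteq{\mathbb{R}}_+$ and $supp(A)$ is not a point set; and the Arrow-Pratt index $r_u(w)=-u''(w)/u'(w)$ is positive since $u'>0$ and $u''<0$. Hence (6.7) yields $\frac{1}{1-\beta^\ast_1}>\frac{1}{1-\beta^\ast_2}$.

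Next I would invoke Proposition 5.1(ii): since $\lambda>0$, both optimal coinsurance rates satisfy $\beta^\ast_1<1$ and $\beta^\ast_2<1$, so $1-\beta^\ast_1>0$ and $1-\beta^\ast_2>0$. On the positive reals the map $x\mapsto 1/x$ is strictly decreasing, so from $\frac{1}{1-\beta^\ast_1}>\frac{1}{1-\beta^\ast_2}$ we obtain $1-\beta^\ast_1<1-\beta^\ast_2$, that is, $\beta^\ast_1>\beta^\ast_2$, which is the assertion.

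There is essentially no obstacle here; the only point worth a remark is that (6.7), like the formulas (6.3) and (6.5) from which it is derived, holds in the approximate sense of Corollary 5.4, so the inequality $\beta^\ast_1>\beta^\ast_2$ is to be read at the level of these approximations. Since the gap in (6.7) is a fixed strictly positive quantity that does not involve the $\beta^\ast_i$ themselves, the conclusion is insensitive to the approximation, and it could equivalently be phrased as: the approximate value of $\beta^\ast_1$ exceeds the approximate value of $\beta^\ast_2$.
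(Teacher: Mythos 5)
Your proof is correct and follows essentially the same route as the paper: establish that the right-hand side of (6.7) is strictly positive, deduce $\frac{1}{1-\beta^\ast_1}>\frac{1}{1-\beta^\ast_2}$, and invert using $1-\beta^\ast_i>0$ from Proposition 5.1(ii). The only difference is that you spell out the positivity of $(\alpha+\beta)^2$ and the inversion step more explicitly than the paper does.
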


\begin{proof}
Since $u'(w)>0$, $u''(w)<0$ implies $r_u(w)=-\frac{u''(w)}{u'(w)}>0$. We have $\lambda >0$ and $E_f(A)>0$, therefore the right hand side of (6.7) is positive. Further, using (6.7) one obtains the inequality $\frac{1}{1-\beta^\ast_1}> \frac{1}{1-\beta^\ast_2}$, from where it follows $\beta^\ast_1
 > \beta^\ast_2$.
\end{proof}

Formulas (6.3), (6.5) and (6.6) may get different forms with respect to the utility function $u$.

\begin{exemplu}
Assume that the utility function $u$ is $HARA$-type (\cite{gollier}, Section 3.6)

$u(w)=\zeta(\eta+\frac{w}{\gamma})^{1-\gamma}$, for $\eta+ \frac{w}{\gamma}>0$ (6.8)

By \cite{gollier}, Section 3.6, $r_u(w)=(\eta+\frac{w}{\gamma})^{-1}$, thus formulas (6.3), (6.5) and (6.6) will get the form:

$\beta^\ast_1 \approx 1-\lambda (\eta+\frac{w}{\gamma})\frac{a+\frac{\beta-\alpha}{6}}{\frac{\alpha^2+\beta^2+\alpha\beta}{18}+\lambda^2(a+\frac{\beta-\alpha}{6})^2}$  (6.9)

$\beta^\ast_2 \approx 1-\lambda (\eta+\frac{w}{\gamma})\frac{a+\frac{\beta-\alpha}{6}}{\frac{\alpha^2+\beta^2}{36}+\lambda^2(a+\frac{\beta-\alpha}{6})^2}$ (6.10)

$\beta^\ast_U \approx 1-2\lambda (\eta+\frac{w}{\gamma})\frac{a+\frac{\beta-\alpha}{6}}{\frac{(\alpha+\beta)^2+2(\alpha^2+\beta^2)}{36}+2\lambda^2(a+\frac{\beta-\alpha}{6})^2}$  (6.11)

If $A$ is a symmetric triangular fuzzy number $(a, \alpha)$, then, setting $\beta=\alpha$ in (6.9)-(6.11) we find the following forms of the three optimal coinsurance rates:

$\beta^\ast_1 \approx 1-\lambda (\eta+\frac{w}{\gamma})\frac{6a}{\alpha^2+6\lambda^2a^2}$   (6.12)

$\beta^\ast_2 \approx 1-\lambda (\eta+\frac{w}{\gamma})\frac{18a}{\alpha^2+18\lambda^2a^2}$   (6.13)

$\beta^\ast_U \approx 1-\lambda (\eta+\frac{w}{\gamma})\frac{9a}{2\alpha^2+18\lambda^2a^2}$   (6.14)

\end{exemplu}

\begin{exemplu}
Assume that the utility function $u$ is CRRA-type:

$$
u(w) = \left\{
        \begin{array}{ll}
            \frac{w^{1-\gamma}}{1-\gamma} & \quad \gamma > 1 \\
            ln(w)   & \quad \gamma=1
        \end{array}   (6.15)
    \right.
$$

Then, by \cite{eeckhoudt}, p. 21, $r_u(w)=\frac{\gamma}{w}$ for $\gamma>1$ and $r_u(w)=\frac{1}{w}$ for $\gamma=1$. Then, by (6.3), (6.5) and (6.6) the following formulas for the optimal coinsurance rates $\beta^\ast_1$, $\beta^\ast_2$, $\beta^\ast_U$ follow:

$\bullet$ for $\gamma>1$:

$\beta^\ast_1 \approx 1-\frac{\lambda w}{\gamma} \frac{a+\frac{\beta-\alpha}{6}}{\frac{\alpha^2+\beta^2+\alpha\beta}{18}+\lambda^2(a+\frac{\beta-\alpha}{6})^2}$  (6.16)

$\beta^\ast_2 \approx 1-\frac{\lambda w}{\gamma} \frac{a+\frac{\beta-\alpha}{6}}{\frac{\alpha^2+\beta^2}{36}+\lambda^2(a+\frac{\beta-\alpha}{6})^2}$ (6.17)

$\beta^\ast_U \approx 1-2\frac{\lambda w}{\gamma} \frac{a+\frac{\beta-\alpha}{6}}{\frac{(\alpha+\beta)^2+2(\alpha^2+\beta^2)}{36}+2\lambda^2(a+\frac{\beta-\alpha}{6})^2}$  (6.18)

$\bullet$ for $\gamma=1$:

$\beta^\ast_1 \approx 1-\lambda w \frac{a+\frac{\beta-\alpha}{6}}{\frac{\alpha^2+\beta^2+\alpha\beta}{18}+\lambda^2(a+\frac{\beta-\alpha}{6})^2}$  (6.19)

$\beta^\ast_2 \approx 1-\lambda w \frac{a+\frac{\beta-\alpha}{6}}{\frac{\alpha^2+\beta^2}{36}+\lambda^2(a+\frac{\beta-\alpha}{6})^2}$ (6.20)

$\beta^\ast_U \approx 1-2 \lambda w \frac{a+\frac{\beta-\alpha}{6}}{\frac{(\alpha+\beta)^2+2(\alpha^2+\beta^2)}{36}+2\lambda^2(a+\frac{\beta-\alpha}{6})^2}$  (6.21)
\end{exemplu}

\begin{exemplu}
We consider the $T$-coinsurance problem with the following initial data:

$\bullet$ the weighting function is $f(t)=2t$, $t \in [0, 1]$;

$\bullet$ $A$ is the triangular fuzzy number $A=(6,2,3)$;

$\bullet$ $u$ is the utility function of CRRA-type $u(w)=ln(w)$, thus $r_u(w)=1$;

$\bullet$ the initial wealth is $w_0=40$ and the loading factor is $\lambda=\frac{1}{2}$.

Formulas (6.1)-(6.3) give the following indicators of the fuzzy number $A$:

$E_f(A)=\frac{37}{6}$, $Var_{T_1}(A)=\frac{19}{18}$, $Var_{T_2}(A)=\frac{13}{36}$

By (4.1), $P_0=(1+\lambda)E_f(A)=\frac{37}{4}$, thus $w=w_0-P_0=\frac{123}{4}$.

Applying formulas (6.3), (6.5) or (6.7), the two optimal coinsurance rates have the approximate values:

$\beta^\ast_1 \approx 1-\lambda w \frac{E_f(A)}{Var_{T_1}(A)+\lambda^2E_f^2(A)}=-10.71$

$\beta^\ast_2 \approx 1-\lambda w \frac{E_f(A)}{Var_{T_2}(A)+\lambda^2E_f^2(A)}=-11.5$
\end{exemplu}

The example above emphasized two $T$-coinsurance problems in which the $T$-coinsurance rated have been strictly negative. In the Appendix we will find a necessary condition for the $T$-coinsurance rate $\beta^\ast_T$ to be strictly positive. We do not know a necessary and sufficient condition for $\beta^\ast_T >0$. For a particular case of the utility function, the following property will give us a sufficient condition for $\beta^\ast_T >0$.

\begin{propozitie}
Assume that the utility function $u$ is defined by: $u(x)=-e^{-x}$, for $x \in {\bf{R}}$. If $\lambda > \frac{1}{E_f(A)}$ then $\beta^\ast_T >0$.
\end{propozitie}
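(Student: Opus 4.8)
The plan is to specialize the approximation formula (5.3) for the optimal $T$-coinsurance rate to the exponential utility $u(x)=-e^{-x}$ and then reduce the claim to an elementary algebraic inequality. First I would check that $u$ satisfies the standing assumptions of Section 5: $u'(x)=e^{-x}>0$ and $u''(x)=-e^{-x}<0$, and that its Arrow--Pratt index is constant, $r_u(w)=-u''(w)/u'(w)=1$ for every $w$. Hence Corollary 5.4 applies and, inserting $r_u(w)=1$ into (5.3), it gives
$$\beta^\ast_T \approx 1-\frac{\lambda E_f(A)}{Var_T(A)+\lambda^2 E_f^2(A)}.$$

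Next, recall that the standing hypothesis $supp(A)\subseteq {\mathbb{R}}_{+}$ (not a point set) yields $E_f(A)>0$, so the denominator $Var_T(A)+\lambda^2 E_f^2(A)$ is strictly positive; therefore $\beta^\ast_T>0$ is equivalent to $\lambda E_f(A)<Var_T(A)+\lambda^2 E_f^2(A)$, that is, to
$$Var_T(A)+\lambda E_f(A)\bigl(\lambda E_f(A)-1\bigr)>0.$$
To finish, I would use the hypothesis $\lambda>\frac{1}{E_f(A)}$: it gives $\lambda E_f(A)>1$, hence $\lambda E_f(A)\bigl(\lambda E_f(A)-1\bigr)>0$, and adding the nonnegative term $Var_T(A)$ (which is $\geq 0$ by axiom (d), cf. (3.3)) we obtain the displayed inequality. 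Thus $\beta^\ast_T>0$, as claimed.

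The argument is essentially a one-line computation, so there is no real technical obstacle; the only point worth being explicit about is that, exactly as in Example 6.6, the quantity $\beta^\ast_T$ here is the value furnished by the approximation formula (5.3), so no separate concavity or first-order-condition analysis of $H(\beta)$ is needed. If instead one wanted the conclusion for the exact maximizer of (4.7), the difficulty would be to control the remainder in the Taylor step underlying Corollary 5.4, which is not negligible when $1-\beta^\ast_T$ is large; I would therefore keep the reading that matches the rest of Section 6.
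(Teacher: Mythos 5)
Your proposal is correct and follows essentially the same route as the paper: specialize formula (5.3) with $r_u(w)=1$, reduce the claim to the positivity of $\lambda^2E_f^2(A)-\lambda E_f(A)+Var_T(A)$, and conclude from $\lambda E_f(A)>1$ together with $Var_T(A)\geq 0$. The only cosmetic difference is that the paper verifies the key inequality via the identity $\lambda^2E_f^2(A)-\lambda E_f(A)+Var_T(A)=(\lambda E_f(A)-1)^2+\lambda E_f(A)-1+Var_T(A)$, whereas you factor it as $Var_T(A)+\lambda E_f(A)(\lambda E_f(A)-1)$ — the same computation.
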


\begin{proof}
One notices immediately that $E_f(A)>0$, thus $\lambda >0$. Also, $r_u(x)=-\frac{u''(x)}{u'(x)}=1$ for any  $x \in {\bf{R}}$, thus, according to (5.3), the optimal $T$-coinsurance rate $\beta^\ast=\beta^\ast_T$ can be approximated as:

$\beta^\ast \approx 1-\frac{\lambda E_f(A)}{Var_T(A)+\lambda^2 E_f^2(A)}$  (6.22)

By hypothesis, $\lambda >\frac{1}{E_f(A)}$, we will have

$E_f^2(A)\lambda ^2 -\lambda E_f(A)+Var_T(A)=(E_f(A)\lambda -1)^2 +\lambda E_f(A)-1+Var_T(A)>0$.

Dividing both members of the previous inequality by $Var_T(A)+\lambda^2 E_f^2(A)>0$ and taking into account (6.22) it follows $\beta^\ast>0$.
\end{proof}

\begin{exemplu}
We consider the following hypotheses:

$\bullet$ the weighting function is $f(t)=2t$, $t \in [0,1]$;

$\bullet$ the risk is represented by the triangular fuzzy number $A=(2, 4, 1)$;

$\bullet$ the utility function is $u(x)=-e^{-x}$, for $x \in {\bf{R}}$;

$\bullet$ the loading factor is $\lambda >0$.

Using the formulas (6.1)-(6.3) we obtain the following indicators:

$E_f(A)=\frac{3}{2}$; $Var_{T_1}(A)=\frac{7}{6}$; $Var_{T_2}(A)=\frac{17}{36}$  (6.23)

Then, by applying the approximation (6.22) of $\beta^\ast_T$ in case of $D$-operators $T_1$, $T_2$ we find:

$\beta^\ast_{T_1} \approx 1-\frac{18 \lambda}{14+27 \lambda^2}$;

$\beta^\ast_{T_2} \approx 1-\frac{54 \lambda}{17+ 81 \lambda^2}$.

In this case the condition of Proposition 6.6 is $\lambda >\frac{2}{3}$. In particular for $\lambda =1$ we obtain $\beta^\ast_{T_1} =\frac{23}{41}$, $\beta^\ast_{T_2}= \frac{22}{49}$.
\end{exemplu}

\section{Concluding Remarks}

The  basic idea of the paper is the study of the coinsurance problem by the expected utility operators from \cite{georgescu2}, \cite{georgescu3}. The main contributions of the paper are:

$\bullet$ to build a coinsurance model in the framework offered by the possibilistic $EU$-theory associated with an expected utility operator;

$\bullet$ the use of $D$-operators defined in \cite{georgescu6} to study the properties of the optimal coinsurance and its approximate calculation;

$\bullet$ the application of the general results to the computation of the coinsurance rates in a few remarkable cases and their comparison.

We report next a few open problems:

(a) We assume that we have a data set representing values of a probabilistic risk (random variables) which appears as a parameter in the context of a probabilistic model (for example, in the coinsurance problem). Based on the existing data, one could determine those indicators by which we know the phenomenon described by the probabilistic model. Thereby, in case of the coinsurance problem, from data one obtains the statistic mean value and variance, then we can compute the optimal coinsurance (by an approximate calculation formula similar to (5.3)). In \cite{vercher}, Vercher et al. present a method by which from a dataset one can build a trapezoidal fuzzy number. By applying Vercher et al.'s method, the probabilistic model of the coinsurance turns into a possibilistic model, in which risk is described by this trapezoidal fuzzy number. We compute then the expected value and the variance associated with this trapezoidal fuzzy number, then by formula (5.3) on can obtain the optimal coinsurance associated with the $T$-possibilistic model. An open problem is to find those formulas describing the way the probabilistic model of coinsurance is turned into a possibilistic model (by Vercher et al's method), which allows a comparison of the two models.

(b) In papers \cite{athayde}, \cite{garlappi}, \cite{niguez} it is studied the effect of absolute risk aversion, prudence and temperance on the optimal solution  for the standard portfolio choice problem (\cite{eeckhoudt}, Section 4.1). A similar problem is investigated in \cite{georgescu6} in the context of $EU$-theory associated with a $D$-operator. It would be interested to study refinements of Theorems 5.3 and 5.9 such that the optimal $T$-coinsurance rate and the standard expected utility to be expressed according to the indicators of risk aversion, prudence and temperance as well as the $T$-moments of the possibilistic risk represented by the fuzzy number $A$.

(c) A third problem is the study of a coinsurance problem with two types of risk: besides the investment risk a background risk might appear. Both the investment risk and the background risk can be probabilistic (random variables) or possibilistic (fuzzy numbers). Besides the purely probabilistic coinsurance model in which both risks are random variables we have:

$\bullet$ the possibilistic model, in which risks are fuzzy;

$\bullet$ two mixed models, in which a risk is a fuzzy number, and the other is a random variable.

To define such bidimensional coinsurance models it is necessary for the notions of multidimensional possibilistic expected utility (\cite{georgescu3}, Section 6.1) and the mixed expected utility (\cite{georgescu3}, Section 7.1) to be generalized for some "multidimensional expected utility operators".

\section{Appendix}

In the following we will prove a necessary condition for the optimal $T$-coinsurance $\beta^\ast=\beta^\ast_T$ to be strictly positive. We will keep the
notations from Sections $4$ and $5$.

\begin{lema}
If $T$ is an expected utility operator, $A$ a fuzzy number and $u, v$ two continuous utility functions then

$T(A, [u(x)-T(A, u(x))][v(x)-T(A, v(x))])=T(A, u(x)v(x))-T(A, u(x))T(A, v(x))$
\end{lema}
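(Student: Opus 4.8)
The plan is to mimic the classical computation $\mathrm{Cov}(U,V)=E[UV]-E[U]E[V]$, using only the linearity axiom (c) and the normalization axiom (b) of Definition 3.1. First I would abbreviate $p=T(A,u(x))$ and $q=T(A,v(x))$; these are fixed real numbers, so the functions $x\mapsto u(x)-p$ and $x\mapsto v(x)-q$ are again continuous, hence legitimate arguments of $T(A,\cdot)$, and so are their product together with $u(x)v(x)$, $u(x)$, $v(x)$ and every constant function. This is the only point where one must be a little careful: subtracting the real constants $p$ and $q$ keeps us inside $\mathcal{C}(\mathbb{R})$, so the expression on the left-hand side of the claimed identity is well defined.

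Next I would expand the product pointwise,
$$[u(x)-p][v(x)-q]=u(x)v(x)-q\,u(x)-p\,v(x)+pq,$$
where the last term denotes the constant function $\overline{pq}$. Applying $T(A,\cdot)$ to both sides and using axiom (c) (iterated over the summands — first grouping $u(x)v(x)-q\,u(x)-p\,v(x)$ as a linear combination of three functions, then adding the constant $\overline{pq}$) gives
$$T(A,[u(x)-p][v(x)-q])=T(A,u(x)v(x))-q\,T(A,u(x))-p\,T(A,v(x))+T(A,\overline{pq}).$$
By axiom (b), $T(A,\overline{pq})=pq$, and by the definitions of $p$ and $q$ the two middle terms equal $qp$ and $pq$. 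Hence the right-hand side collapses to $T(A,u(x)v(x))-pq-pq+pq=T(A,u(x)v(x))-T(A,u(x))T(A,v(x))$, which is exactly the asserted equality.

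I do not expect any genuine obstacle: neither strict monotonicity, nor the monotonicity axiom (d), nor the $D$-operator axioms are needed, so the lemma holds for every expected utility operator in the sense of Definition 3.1. It is worth noting that this is precisely the covariance-type identity implicit in the proof of Theorem 5.3: taking $u(x)=v(x)=x$ it specializes to $Var_T(A)=T(A,x^2)-E_f^2(A)$, which is the step used there to rewrite $T(A,(x-P_0)^2)$ as $Var_T(A)+(E_f(A)-P_0)^2$.
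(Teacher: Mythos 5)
Your proof is correct and is exactly the argument the paper intends: its proof of this lemma consists of the single line ``One uses axioms (b) and (c) from Definition 3.1,'' and your expansion of $[u(x)-p][v(x)-q]$ followed by linearity and the constant axiom is precisely that computation spelled out. Your closing observation that the case $u(x)=v(x)=x$ recovers the identity $T(A,(x-P_0)^2)=Var_T(A)+(E_f(A)-P_0)^2$ used in Theorem 5.3 is accurate as well.
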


\begin{proof}
One uses axioms (b) and (c) from Definition 3.1.
\end{proof}

\begin{propozitie}
Let $T$ be a strictly increasing expected utility operator. Assume that $\lambda >0$. Then from  $\beta^\ast_T >0$ the following inequality follows:

$\lambda < \frac{T(A, (x-E_f(A))[u'(w_0-x)-T(A, u'(w_0-x))])}{E_f(A)T(A, u'(w_0-x))}$
\end{propozitie}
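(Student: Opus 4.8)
The plan is to exploit the first-order condition for the optimal coinsurance rate and then specialize it to $\beta^\ast_T > 0$. Recall from (4.8) that $H'(\beta) = T(A, (x - P_0)u'(g(x,\beta)))$, where $g(x,\beta) = w_0 - \beta P_0 - (1-\beta)x$, so that at $\beta = 0$ we have $g(x, 0) = w_0 - x$. The first step is to use concavity of $H$: since $T$ is strictly increasing, $H$ is strictly concave (as established after (4.9)), so $H'$ is strictly decreasing. Therefore $\beta^\ast_T > 0$ forces $H'(0) > 0$, i.e.
$$T(A, (x - P_0)u'(w_0 - x)) > 0.$$
This is the inequality I would take as the starting point.

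Next I would substitute $P_0 = (1+\lambda)E_f(A)$ and split $x - P_0 = (x - E_f(A)) - \lambda E_f(A)$. Using the linearity axiom (c) of Definition 3.1 together with axiom (b), this decomposes the left-hand side as
$$T(A, (x - E_f(A))u'(w_0 - x)) - \lambda E_f(A)\,T(A, u'(w_0 - x)) > 0.$$
Now I would use Lemma 7.1 (the appendix lemma): applying it with $u(x) := x$ (so $T(A, x) = E_f(A)$ by axiom (a)) and $v(x) := u'(w_0 - x)$, we get
$$T(A,(x - E_f(A))[u'(w_0-x) - T(A,u'(w_0-x))]) = T(A,(x-E_f(A))u'(w_0-x)) - E_f(A)\,T(A,u'(w_0-x)),$$
since $T(A, x - E_f(A)) = 0$. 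Hence $T(A,(x-E_f(A))u'(w_0-x))$ equals the numerator in the claimed bound plus $E_f(A)\,T(A, u'(w_0-x))$. Substituting this back into the inequality above gives
$$T(A,(x-E_f(A))[u'(w_0-x)-T(A,u'(w_0-x))]) + E_f(A)\,T(A,u'(w_0-x)) - \lambda E_f(A)\,T(A,u'(w_0-x)) > 0.$$

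Finally, since $T$ is strictly increasing and $u' > 0$, we have $T(A, u'(w_0 - x)) > 0$ (this is the remark following Example 3.4 about $h > 0 \Rightarrow T(A,h) > 0$), and $E_f(A) > 0$ by hypothesis, so $E_f(A)\,T(A, u'(w_0-x)) > 0$. Dividing the displayed inequality through by this positive quantity and rearranging isolates $\lambda$ on one side, yielding exactly
$$\lambda < \frac{T(A,(x-E_f(A))[u'(w_0-x) - T(A,u'(w_0-x))])}{E_f(A)\,T(A,u'(w_0-x))} + 1,$$
wait — I must be careful with the $+1$: the term $E_f(A)T(A,u'(w_0-x))$ that is not multiplied by $\lambda$ should be absorbed, so in fact after dividing one gets $1 - \lambda > -\,\text{(numerator)}/(E_f(A)T(A,u'(w_0-x)))$; I expect the intended statement drops this constant because of a slightly different normalization of the starting inequality, or because the covariance term already accounts for it. The step I expect to be the main obstacle is precisely this bookkeeping: matching the constant terms so that the final inequality comes out in the exact form stated, which will require care about whether the first-order condition is evaluated at $\beta = 0$ or in the limiting sense, and about the precise grouping in Lemma 7.1. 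Everything else is routine application of axioms (a)–(d), Lemma 7.1, and the positivity consequence of strict monotonicity.
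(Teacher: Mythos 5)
Your overall strategy is exactly the paper's: use strict concavity of $H$ to get $H'(0)>0$ from $\beta^\ast_T>0$, write $H'(0)=T(A,(x-P_0)u'(w_0-x))$, decompose via the appendix lemma, and divide by the positive quantity $E_f(A)\,T(A,u'(w_0-x))$. But there is a concrete algebraic error in your application of the lemma, and it is precisely what produces the spurious $+1$ that you flag at the end and cannot resolve. With $u(x):=x$ and $v(x):=u'(w_0-x)$, the lemma gives
\[
T\bigl(A,(x-E_f(A))[v(x)-T(A,v)]\bigr)=T(A,x\,v(x))-E_f(A)\,T(A,v),
\]
and since $T(A,x\,v(x))-E_f(A)\,T(A,v)=T\bigl(A,(x-E_f(A))v(x)\bigr)$ by linearity, the centered quantity (your numerator $N$) equals $T\bigl(A,(x-E_f(A))v(x)\bigr)$ \emph{exactly} --- there is no extra $-E_f(A)\,T(A,v)$ term. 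You instead wrote $N=T\bigl(A,(x-E_f(A))v(x)\bigr)-E_f(A)\,T(A,v)$, which amounts to subtracting $E_f(A)\,T(A,v)$ twice; equivalently, you replaced $T(A,x\,v(x))$ on the right side of the lemma by $T\bigl(A,(x-E_f(A))v(x)\bigr)$ without compensating.

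Once this is corrected, your own decomposition $x-P_0=(x-E_f(A))-\lambda E_f(A)$ gives
\[
0<H'(0)=T\bigl(A,(x-E_f(A))v(x)\bigr)-\lambda E_f(A)\,T(A,v)=N-\lambda E_f(A)\,T(A,v),
\]
and dividing by $E_f(A)\,T(A,v)>0$ yields $\lambda<N/\bigl(E_f(A)\,T(A,v)\bigr)$ with no constant term, which is the stated inequality. So the gap is not a ``normalization'' issue in the statement, as you speculate; it is a bookkeeping mistake in your use of the lemma. Everything else in your argument (concavity, positivity of $T(A,u'(w_0-x))$ from strict monotonicity, $E_f(A)>0$) matches the paper's proof.
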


\begin{proof}
We will denote $\beta^\ast=\beta^\ast_T$. From (4.4) we have $g(x, 0)=w_0-x$, thus, by (4.8):

$H'(0)=T(A, u'(w_0-x)(x-P_0))$  (a)

Since $T$ is strictly increasing, $H(\beta)$ is a strictly concave function, thus $H'(\beta)$ is a strictly decreasing function. Then the following implication holds:

$\beta^\ast >0 \Rightarrow 0=H'(\beta^\ast)<H'(0)$  (b)

Applying (a) and Lemma 8.1 it follows

$H'(0)=T(A, [u'(w_0-x)-T(A, u(w_0-x))][x-P_0-T(A, x-P_0)])+T(A, u'(w_0-x))T(A, x-P_0)$

Noticing that $T(A, x-P_0)=-\lambda E_f(A)$, the previous inequality gets the form

$H'(0)=T(A, (x-E_f(A))[u'(w_0-x)-T(A, u'(w_0-x))])-\lambda E_f(A)T(A, u'(w_0-x))$

Then the inequality $H'(0)>0$ is written as:

$T(A, (x-E_f(A))[u'(w_0-x)-T(A, u'(w_0-x_0))])> \lambda E_f(A)T(A, u'(w_0-x))$.

Since $T$ is strictly increasing and $u'(w_0-x)>0$, we have $T(A, u'(w_0-x))>0$. Also $E_f(A)>0$, thus the last inequality from above is equivalent with

$\lambda < \frac{T(A, (x-E_f(A))[u'(w_0-x)-T(A, u'(w_0-x))]}{E_f(A)T(A, u'(w_0-x))}$  (c)

From (b) and (c) the implication which we had to prove follows.

\end{proof}

\end{document}